\tikzset{every picture/.style={>=latex,semithick}}
\renewcommand{\l}{\ell}
\algrenewcommand\alglinenumber[1]{\tiny #1:}
\newcommand{\mc}{M}
\newcommand{\move}[1]{\stackrel{#1}{\rightarrow}}
\renewcommand{\node}{\rho}
\newcommand{\Prob}{P}
\def\bx{{\textbf{x}}}
\def\by{{\textbf{y}}}
\def\bv{{\textbf{v}}}
\declaretheorem[name=Theorem]{thm}
\newcommand{\U}{\mathbf{U}}
\newcommand{\citep}{\cite}
\newcommand{\Ut}{\mathbf{U}^{t}}
\newcommand{\Uk}{\mathbf{U}^{k}}
\newcommand{\Ult}{\mathbf{U}^{\leq t}}
\newcommand{\Ulk}{\mathbf{U}^{\leq k}}
\newcommand{\Glt}{\mathbf{G}^{\le t}}
\newcommand{\Flt}{\mathbf{F}^{\le t}}
\newcommand{\Ft}{\mathbf{F}^{t}}
\renewcommand{\Re}{\mathbb{R}}
\newcommand{\vv}{\textbf{v}}
\newcommand{\vx}{\textbf{x}}
\newcommand{\ww}{\textbf{w}}
\newcommand{\vV}{\textbf{V}}
\newcommand{\vW}{\textbf{W}}
\newcommand{\wW}{\textbf{W}}
\newcommand{\ra}{\rightarrow}
\newcommand{\tm}{\mathbb{T}}
\renewcommand{\l}{\ell}
\newcommand{\tildex}{\raise.17ex\hbox{$\scriptstyle\mathtt{\sim}$}}
\def\bx{{\textbf{x}}}
\def\by{{\textbf{y}}}
\def\bv{{\textbf{v}}}
\def\ra{\rightarrow}
\newcounter{theass} \setcounter{theass}{1}
\newcounter{theeg} \setcounter{theeg}{1}
\newcommand{\INIT}{\textup{INIT}}
\newcommand{\bP}{\textbf{P}}
\newcommand{\paths}{\mathsf{paths}}
\let\subparagraph\paragraph
\begin{document}

\title*{Statistical Model Checking based Analysis of Biological Networks}
\titlerunning{Statistical Model Checking based Analysis of Biological Networks}
\author{Bing Liu, Benjamin M. Gyori and P.S. Thiagarajan}
\authorrunning{B. Liu, B. M. Gyori, and P.S. Thiagarajan} 
\institute{Bing Liu \at Department of Computational and Systems Biology, University of Pittsburgh, Pittsburgh, PA 15237, U.S.A., \email{liubing@pitt.edu}
\and Benjamin M. Gyori \at Laboratory of Systems Pharmacology, Harvard Medical School, Boston, MA 02115, U.S.A., \email{benjamin_gyori@hms.harvard.edu}
\and P.S. Thiagarajan \at Laboratory of Systems Pharmacology, Harvard Medical School, Boston, MA 02115, U.S.A., \email{thiagu@hms.harvard.edu}}
\maketitle

\abstract{We introduce a framework for analyzing ordinary differential equation (ODE) models of biological networks using statistical model checking (SMC). A key aspect of our work is the modeling of single-cell variability by assigning a probability distribution to intervals of initial concentration values and kinetic rate constants. We propagate this distribution through the system dynamics to obtain a distribution over the set of trajectories of the ODEs. This in turn opens the door for performing statistical analysis of the ODE system's behavior. To illustrate this we first encode quantitative data and qualitative trends as bounded linear time temporal logic (BLTL) formulas. Based on this we construct  a parameter estimation method using an SMC-driven evaluation procedure applied to the stochastic version of the behavior of the ODE system. We then describe how this SMC framework can be generalized to hybrid automata by exploiting the given distribution over the initial states and the--much more sophisticated--system dynamics to associate a Markov chain with the hybrid automaton. We then establish a strong relationship between the behaviors of the hybrid automaton and its associated Markov chain. Consequently, we sample trajectories from the hybrid automaton in a way that mimics the sampling of the trajectories of the Markov chain. This enables us to verify approximately that the Markov chain meets a BLTL specification with high probability. We have applied these methods to ODE based models of Toll-like receptor signaling and the crosstalk between autophagy and apoptosis, as well as to systems exhibiting hybrid dynamics including the circadian clock pathway and cardiac cell physiology. We present an overview of these applications and summarize the main empirical results. These case studies demonstrate that the our methods can be applied in a variety of practical settings.}

\vspace{-0.5cm}
\section{Introduction}
\label{sec:introduction}
\vspace{-0.2cm}
Many fundamental cellular functions are governed by biochemical networks--often called biopathways--via molecular interactions. Major diseases \citep{ferrari00} are caused by their malfunctioning. Hence it is vital to understand their behaviors. Mathematical models have the potential to elucidate the complex behavior of biopathways in a principled way \citep{aldridge_physicochemical_2006}. However it is very challenging to construct and analyze such models. Here we address the analysis  challenge by presenting methods for parameterizing and analyzing a rich class of models with respect to qualitative and quantitative constraints, in a setting with variable behaviors across a cell population.

Broadly speaking, the sources of variability in a population of cells can be due to both  intrinsic and extrinsic components.
Intrinsic variability is often due to stochasticity at the level of biochemical reactions or gene expressions which, when evolving concurrently can lead to distinct cellular states across a population. Extrinsic variability has at least two major sources. First, differences in the abundances of proteins present in each cell initially (corresponding to initial conditions in a model) are a major source of variability~\citep{spencer2009non}. Second, due to variability in environmental conditions encountered by each cell, the kinetic rates of reactions (corresponding to model parameters) will also vary across cells~\cite{snijder2011origins,weisse2010quantifying}. Intrinsic variability is often handled by using a probabilistic model such as the continuous-time Markov chains (CTMCs) \cite{wilkinson2011stochastic}. However, it is computationally difficult to calibrate and analyze such models--especially for large pathways--using stochastic simulation, due to the fact that they track the exact number of molecular species present in the network as well as individual reaction occurrences. On the other hand, deterministic models such as ordinary differential equations (ODEs) cannot naturally handle stochasticity at the level of individual reactions but they are computationally more tractable. These models are adequate when the number of molecules of each type involved in the pathway is high \cite{klipp2005systems}.

In this chapter we discuss methods to analyze ODE based dynamical models while accounting for such variability \cite{palaniappan2013,gyori2015}. First, we introduce a method by which the deterministic ODE dynamics is transformed into a stochastic one. The starting point for this method is to assume a probability distribution over an interval of values for each kinetic parameter and initial state. One can then use our method to quantify how specific dynamical properties vary as a result of the assumed distribution over the initial value states and parameters. We then discuss the extension of these methods to hybrid systems in which discrete switching between modes of ODE-governed continuous dynamics are modeled. In both cases, we build on statistical model checking (SMC) as the basis for model calibration and analysis. Finally, we give a combined overview of how these approaches have been applied to various biological systems and corresponding models \cite{Liu2016,Liu2017}.

Our approach to the probabilistic analysis of ODEs starts by requiring that the vector field associated with the system of ODEs is continuously differentiable. Based on this, we show that a distribution over initial states of the system induces a distribution over the space of finite trajectories of the system. For our intended applications this continuity requirement is easily met. The induced distribution over the finite trajectories at once leads to a simple SMC technique which we use to analyze the ODE system. An important strength of SMC is that its computational complexity is independent of the system size, and it therefore scales well. Further, formulating the analysis problem as a hypothesis test evaluated sequentially allows early stopping which leads to a reduced number of samples required to make a decision \cite{younes_numerical_2006}.

To illustrate the relevance of our approach, we develop a parameter estimation method. The set of unknown parameters in ODE models generally consist of kinetic rate constants and initial concentrations. Here, we assume for convenience that the nominal values of the initial conditions are known but that they fluctuate around this value in a bounded manner over the cell population. 

As might be expected, the parameter estimation procedure aims to find a set of parameters that best fits the given data, and at the same time can predict new behaviors~\cite{moles_parameter_2003}. To be able to take into account both quantitative time course experimental data and qualitative dynamical behaviors in parameter estimation, we represent both types of information as bounded linear time temporal logic (BLTL) formulas. For a given set of parameters, we compute its objective value using our SMC approach which assesses how well the model matches the data with the parameters (while taking into account the fluctuation of these values across the cell population giving rise to the data). We then employ a SERS \cite{sres} based algorithm to search in the parameter space for the optimal set of parameters with the smallest SMC based objective value. One can also construct a sensitivity analysis procedure by following a similar strategy \cite{palaniappan2013}.

We then describe an extension  of our SMC based analysis to the setting of multi-mode dynamics \cite{gyori2015}. One is 
interested in  this richer class of dynamics since many biological processes have multiple modes of operation 
where different portions of the underlying biochemical network switch in and out of action in different modes. 
The cell cycle, the operation of cardiac cells and the circadian clock pathway \cite{thecell,fenton08,miyano06} are  representative examples of such systems.
We use the standard formalism of hybrid automata  to capture multi-mode dynamics. As usual at each mode the system variables evolve continuously according to an associated system of ODEs. When the system enters a designated portion of the state space described by guards associated with the mode transitions, the system moves to a new mode instantaneously where it starts to evolve according to the ODEs associated with the new mode. 

Due to their highly expressive dynamics, hybrid automata are difficult to analyze effectively \cite{henzinger-lics-survey}. A common approach to achieve tractability is to restrict the mode dynamics in various ways \cite{girard2006efficient,frehse2005phaver,clarke2003verification,alur2000discrete,agrawal2006behavioural,henzinger1999discrete}. However, restricting the dynamics is not a good option for systems biology models, since the dynamics being studied is almost always non-linear. Accordingly, we lift our SMC based analysis technique to hybrid automata without unduly restricting their dynamics.

As before, we assume that the values of initial states follow a probability distribution. To simplify the presentation, the values of rate constants are assumed to be fixed. A major complication in dealing with the dynamics is that for a given trajectory, the value states and time points at which a guard is satisfied--and thus a mode change can occur--will be determined by the solution of the ODEs defined in each mode. Unfortunately, these solutions typically cannot be constructed in closed-form for high-dimensional systems. We deal with this by approximating the transition between two modes as a stochastic event. More precisely, we define the transition probability to be proportional to the measure on the set of \textit{(value state, time point)} pairs at which the guard corresponding to the transition is satisfied. To ensure a sound mathematical basis we further assume: (i) for each mode, the vector field associated with the mode's ODE system is Lipschitz continuous. (ii) The hybrid system's states are observed only at discrete time points. (iii) The guard sets and the set of initial states are bounded open sets. 
(iv) There is an upper bound on the number of transitions that can be triggered in a unit time interval (i.e. the system is strictly ``non-Zeno''). In fact, for technical convenience, we will assume that the time discretization has been chosen such that no more than one mode transition occurs between two successive time points. As might be expected, our procedure for sampling trajectories will crucially depend on the discrete time instances and will fail to detect multiple mode changes that may take place within a unit time interval. Thus there is a tradeoff between accuracy and efficiency involved in the choice of the discrete time interval. However in many biological settings successive mode changes within very small time intervals are unlikely and hence imposing the strong version of ``non-Zeno'' mode changes is realistic.
 
In this setting we then establish that the behavior of the hybrid system $H$ can be approximated as an infinite state Markov chain $M$ whose state space will be ``tree like''. For our purposes it will suffice to focus on the hybrid system's behavior within a bounded time horizon. Consequently we fix BLTL~\cite{clarke1999model} to be the specification logic. A key feature of our approximation is that $H$ satisfies the specification $\psi$--i.e. every trajectory of $H$ is a model of $\psi$--iff $M$ satisfies the specification $\psi$ with probability $1$. We can therefore use the Markov chain representation to approximately verify properties of interest for the hybrid automaton. However, $M$ can only be defined mathematically since constructing even a bounded initial portion of it can be intractable. This is due to the fact that the transition probabilities of $M$ between modes depends on the ODEs for which there is no closed-form solution available. For the same reason, whether a transition exists between a pair of states also can not be determined effectively. With this in mind, we design a SMC procedure to check that the chain almost surely satisfies a specification (implying satisfaction in the hybrid system as well) by generating samples from the Markov chain. To this end, we sample random trajectories of $H$ using numerical simulation such that it mimics sampling paths from the associated Markov chain. Note that a naive Monte Carlo simulation based strategy in which one samples a starting values state from the initial distribution and then generates a trajectory will have difficulty in ``randomly'' picking a mode transition. Instead, our approximation approach samples trajectories from $H$ in a principled way that mirrors the sampling of a trajectory from $M$.

We next present the results of applying our analysis methods to both single system of ODEs and hybrid systems. We have elsewhere \cite{palaniappan2013} SMC based parameter estimation and analysis methods to multiple pathway models available in the BioModels database \cite{biomodel}. Here we present the results for the myosin light chain (MLC) phosphorylation pathway, a large model governed by $105$ ODEs. $100$ of the model's $197$ rate constants are assumed to be unknown. We applied our SMC based parameter estimation technique. The results presented in Sect. \ref{sec:results} show that our approach is efficient and scales well. We also describe key results of the subsequent applications of this approach--in collaboration with experimentalists--to Toll-like receptor pathways \cite{Liu2016} and a model of autophagy-apoptosis crosstalk \cite{Liu2017}.

We have also validated our SMC based analysis method for hybrid automata using a number of case studies \cite{gyori2015}. Here we present the results for a hybrid system model of the cardiac heart cell \cite{fenton08}. Using this model and our SMC based analysis method we explore dynamical properties of importance on a multitude of cell types under a variety of input stimuli, and in healthy as well as diseased conditions.

As our results show the SMC based analysis framework for a single ODE system scale well. In the case of hybrid automata the cardiac cell model studied here involves only $4$ continuous variables while the circadian rhythm model analyzed in \cite{gyori2015} has $11$ continuous variables. However this has more to do with models we have studied rather than an inherent limitation on the dimension (the number of variables and modes) of the hybrid systems that can be analyzed using our method.

\vspace{-0.8cm}
\subsection{Related work}
\vspace{-0.3cm}
There are considerable previous studies on model checking of stochastic models \citep{jha_bayesian_2009,Heath2008,LiChen_MolecularBiosystem_2011,liu_approximate_2012}. Relevant to the present context is the research reported  in \cite{donaldson2008monte,Donaldson_2008_article} where probabilistic properties are verified by sampling a fixed number of trajectories from the model's state space. Hence this work does not provide any statistical guarantees. Instead, SMC based methods reported in \cite{Clarke_2008,jha_bayesian_2009} adaptively generate sampled trajectories to verity probabilistic properties and provide a statistical guarantee that can be specified by the user. 

Turning to model calibration, \cite{LiChen_MolecularBiosystem_2011} employed a brute force search with temporal logic constraints to estimate the parameters of Petri net models. For ODE models, parameter estimation combined with model checking has been carried out using different search strategies, including brute force search \cite{calzone2006machine}, genetic algorithms \cite{Donaldson_2008_article}, and covariance matrix adaptation evolution strategy \cite{rizk_continuous_2008}. However, the limitation of these studies is that each candidate set of parameters will be assessed by either a single trajectory \cite{calzone2006machine,rizk_continuous_2008} or only a fixed number of samples \cite{Donaldson_2008_article}.
Further, these studies did not validate the quality of the parameter estimates using am independent test data set. Turning to sensitivity analysis, \cite{Donaldson_2008_article} proposed to quantify the significance of a parameter with respect to a property by counting the number of parameter values using which the system meets the specification. Our SMC framework enables a principled and systematic property-based sensitivity analysis as described in \cite{palaniappan2013}.

As for related work on hybrid automata, \cite{abate2005stochastic} uses probabilistic barrier functions to substitute the guards in order to approximate mode transitions as random events. In our setting, we constructed the transition probabilities in a similar but simpler manner. This has enabled us to verify properties specified by temporal logic using sampling and numerical simulations. On the other hand, $\delta$-reals \cite{deltareach} adopts a different approximation approach, which aims to tackle the reachability problem by allowing small perturbations. Finally, stochastic hybrid automata and their analysis have been extensively studied~\cite{cassandras2010stochastic,blom2006stochastic,julius2009approximation,ballarini2011cosmos}.

\vspace{-0.8cm}
\subsection{Outline of the chapter}
\vspace{-0.3cm}
In the next section, we give an introduction to ODE models and their trajectories while incorporating intervals of initial states for the variables and the rate constants. In Sect. \ref{sec:smc-ode} we present BLTL and its semantics followed by the SMC procedure for ODE systems. We then describe how this can be used as the basis for performing parameter estimation. From here we lift the key ideas to the setting of hybrid automata. To do this, in Sect. \ref{sec:hybrid-model} we give an introduction to hybrid automata and their dynamics. This is followed by the mathematical construction of their Markov chain approximation. We then develop the SMC procedure for the Markov chain approximation and an algorithm for sampling trajectories from the hybrid automaton that mimics the sampling of the associated Markov chain. We present a number of applications and case studies in Sect \ref{sec:results}. We summarize the key aspects of this chapter in the final section. 

\vspace{-0.5cm}
\section{Pathway models based on a system of ODEs}
\vspace{-0.3cm}
\label{sec:model}
Here we introduce ODE based models of biopathways. Our models will account for variability in the initial conditions and the rate constants. Much of material in this section is abbreviated from \cite{palaniappan2013} and is used here for fixing the technical background.

\vspace{-0.8cm}
\subsection{ODEs preliminaries}
\vspace{-0.3cm}
\label{sec:ode}
Our ODE system involves a set of variables $\{x_1, x_2, \ldots, x_n\}$. They will correspond to the molecular species in the pathway 
while $\{\theta_1, \theta_2, \ldots, \theta_m\}$ will constitute the set of rate constants (parameters) in the ODE system. There will be one ODE for each  $x_i$ and it  will be of the form: 
\begin{equation*}
\frac{dx_i}{dt} = f_i(\textbf{x}_i,\Theta_i).
\end{equation*} 
Here  $f_i$ describes the dynamics of the reactions $R_i$  in which $x_i$ participates (as a reactant or product). $\textbf{x}_i$ is a vector of concentrations of the molecular species in $R_i$ while $\Theta_i$ denotes the parameters associated with the reactions in $R_i$. We assume that the individual reactions in the pathway are governed by mass action or Michaelis-Menten kinetics \cite{klipp2005systems} (other types of reaction kinetics can be handled by our method as well). Hence each $f_i$ will be of the form: $f_i =\sum_{j=1}^{r_i}c_j g_j$, where $r_i$ denotes the number of reactions in $R_i$ and $c_j = +1$ ($-1$) if $x_i$ is a product (reactant) of the $j^{\mathrm{th}}$ reaction. In addition, $g_j$ are rational functions of the form
$g_j= \theta_{\alpha} x_i x_k$ (mass action) or $g_j=\theta_\kappa x_i / (\theta_{\kappa'} + x_i)$ (Michaelis-Menten) with $k \in \{1,2,\ldots,n\}$ and $\alpha, \kappa, \kappa' \in \{1,2,\ldots,m\}$. We are assuming here all the stoichiometric coefficients to be $1$   but non-unitary coefficients can be incorporated easily.

Fig.~\ref{example} shows a simple biochemical network depicting an enzyme catalyzed reaction and its mass action model as an ODE system. In this network, the enzyme $E$ binds reversibly to the substrate $S$ which then leads irreversibly to the generation of the product $P$ and release of the enzyme. The parameters $k_1$, $k_2$ and $k_3$ are rate constants that govern the rate of these reactions. The corresponding ODEs are shown in Fig.~\ref{example}(b).
\vspace{-12pt}
\begin{figure}[!ht]
\centering
\includegraphics[scale=0.45]{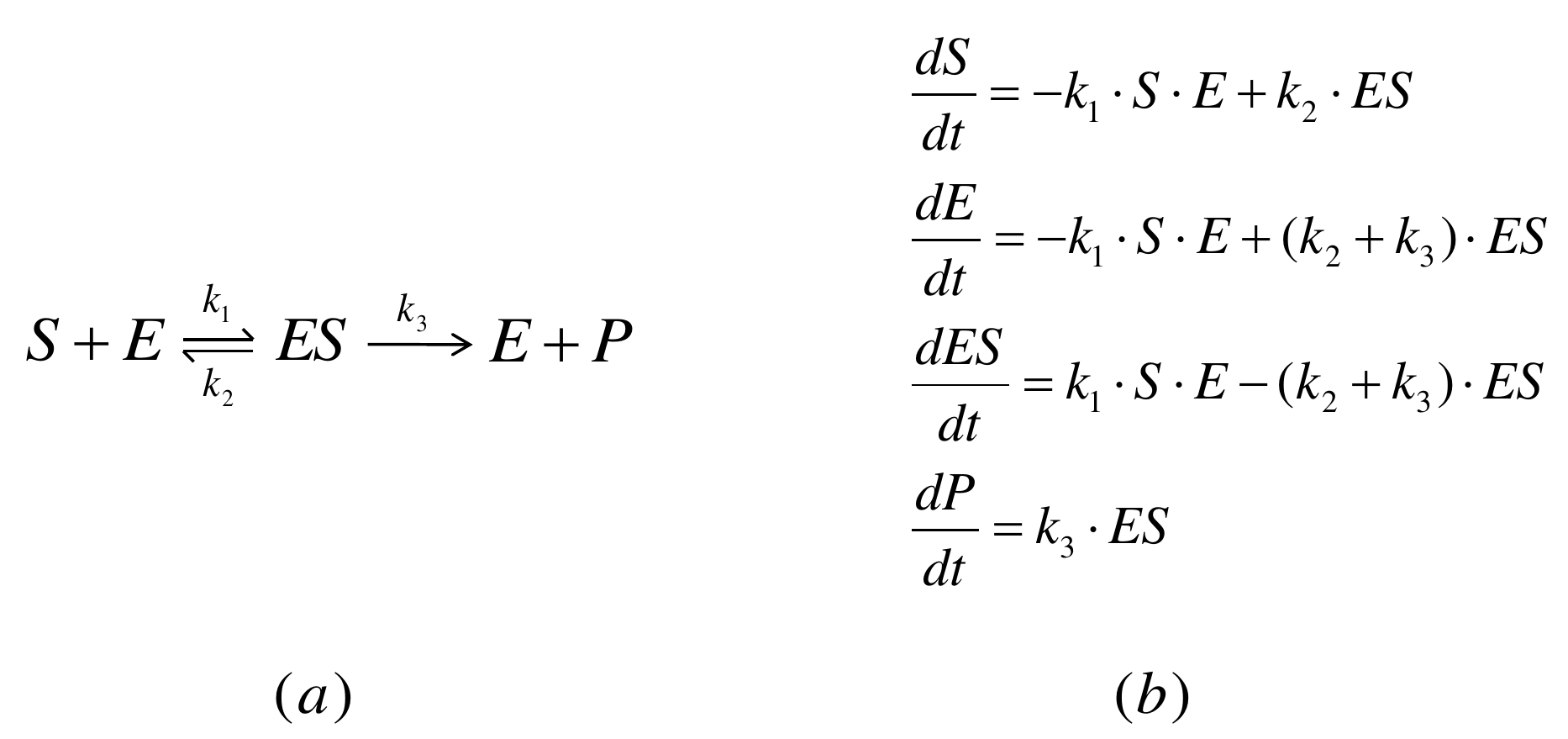}
\centering
\caption{A biochemical network and its ODE model}
\label{example}
\end{figure}

Each $x_i$ is to be viewed as a real-valued function of  time $t$ whose domain will be $\mathbb{R}_{+}$, the set of non-negative real numbers. We assume the range of values of $x_i(t)$ (corresponding to the concentration levels that can be attained by the species $x_i$) to be the interval $[L_i, U_i]$, where $L_i$ and $U_i$ are non-negative rational numbers with $L_i < U_i$.  We define $\vV = [L_1, U_1] \times \ldots \times [L_n, U_n]$. We let $\Theta = \{\theta_1, \theta_2, \ldots, \theta_m\} $ to be the set of all parameters and assume that $\theta _j$ takes values in $[L^j, U^j]$ with $1 \leq j \leq m$. We also define 
$\vW = [L^1, U^1] \times \ldots \times [L^m, U^m]$. 

To represent single-cell variability in the initial states, we associate an interval $[L^{init}_i, U^{init}_i]$ with each  $x_i$  such that $L_i \leq L^{init}_i < U^{init}_i \leq U_i$, and and interval $[L^j_{init}, U^j_{init}]$ with each parameter $\theta_j$ such that $L^j \le L^j_{init} < U^j_{init} \le U^j$. In what follows, an implicit and important assumption is that the value of a parameter does not evolve over time.
We define $INIT = (\prod_i [L^{init}_i, U^{init}_i]) \times (\prod_j [L^j_{init},U^j_{init}])$. We let $\vv \in \vV$ to range over $\prod_i [L^{init}_i, U^{init}_i]$ and $\ww \in \vW$ to range over $\prod_j [L^j_{init} , U^j_{init}]$.

\vspace{-0.5cm}
\subsection{Trajectories of the ODE system}
\vspace{-0.3cm}
The vector form of our ODE system is $d \textbf{x} /dt = F( \textbf{x} , \Theta )$.
%Recall that a function $g: \textbf{V} \rightarrow \textbf{V}$ is a $C^1$ function if $g'$, the derivative of $g$, exists at all $\vv \in \vV$ and is a continuous function. 
Since we have assumed each reaction in $R_i$ is governed by  mass action or Michaelis-Menten~\cite{klipp2005systems} kinetics, it is justifiable to assume each $f_i$ is of class $C^1$ (i.e. continuously differentiable). As a result, $F: \vV \rightarrow \vV$ is also of class $C^1$. Consequently, for each $(\vv, \ww) \in INIT$ the ODE system will have a unique solution  $\textbf{X}_{\vv,\ww}(t)$ \cite{hirsch2012differential}. Furthermore, this solution will satisfy:  $\textbf{X}_{\vv,\ww}(0) = \vv$ and $\textbf{X}_{\vv,\ww}'(t) = F(\textbf{X}_{\vv,\ww}(t))$. It is also guaranteed that $\textbf{X}_{\vv,\ww}(t)$ is of class $C^0$ (i.e. continuous) \cite{hirsch2012differential} and therefore measurable. This will constitute the underpinning for our SMC procedure formulated in Sect. \ref{sec:smc-ode}.

It is helpful to define the flow $\Phi : \mathbb{R}_{+} \times \textbf{V} \times \textbf{W} \rightarrow \textbf{V}$ induced by the solution to the ODE system. $\Phi(t, \textbf{v}, \ww)$ is intuitively the state reached if the system starts at time $0$ with $\textbf{v}$ as the values of the variables and $\ww$ as the values of the parameters. The flow will be of class $C^0$ given by: $\Phi(t, \textbf{v}, \ww) = \textbf{X}_{\textbf{v},{\textbf{w}}}(t)$. This function  will satisfy 
  $\Phi(0,\textbf{v}, \ww) = \textbf{X}_{\textbf{v},\textbf{w}}(0) = \textbf{v}$
and $\partial(\Phi(t,\textbf{v}, \ww))/{\partial t} = F(\Phi(t,\textbf{v}, \ww))$ for all $t$.

We will also sometimes  work with the induced flow $\Phi_t : \vV \times \wW \ra \vV$ given by 
$\phi_t(\vv, \ww) = \Phi(t, \vv, \ww)$. Again $\Phi_t$ will be a $C^0$ function for each 
$t$ in $\mathbb{R}_{+}$.

The dynamics of interest of the ODE systems in our intended applications will be only up to a maximal time point $T$. Hence throughout what follows we fix such a positive rational $T$. 
Finally, a  \emph{trajectory} of the ODE system  is denoted $\sigma_{\vv,\ww}$ if it starts at time $0$ from $\vv \in \vV$, and uses $\ww$ as the vector of parameter values. It is the $C^0$ function $\sigma_{\vv,\ww}: [0, T] \rightarrow \vV$ that satisfies: $\sigma_{\vv,\ww}(t) =
X_{\vv, \ww}(t)$. We denote the the behavior of interest of the ODE system as $BEH$, which is the following set of trajectories:
\begin{equation*} 
BEH = \{ \sigma_{\vv,\ww} \mid (\vv,\ww) \in INIT\}.
\end{equation*}

\vspace{-0.5cm}
\section{Statistical model checking of ODE dynamics}
\label{sec:smc-ode}
\vspace{-0.3cm}
We now turn to the statistical analysis of the behavior of an ODE system. The key idea is to assume a probability distribution over $\INIT$ and use it to induce a distribution over $BEH$. 

\vspace{-0.5cm}
\subsection{BLTL}
\label{sec:smc_bltl}
\vspace{-0.3cm}
Our formulas will be interpreted at a finite set of time points $\mathcal{T} = \{0, 1, \ldots, T\}$. Discretization in time is justified since experimental observations are typically only available at a finite number of specific time points, and the fact that it allows expressing qualitative properties of interest. Going forward, we assume that an appropriate $\mathcal{T}$ has been chosen which includes the time points relevant to both the available quantitative experimental observations and the qualitative properties of interest.

Given these considerations, BLTL is a suitable a specification logic for our SMC procedure. The atomic propositions of the logic take the form $(i,\l,u)$ with $L_i \leq \l < u \leq U_i$, which is interpreted as ``the current level of $x_i$ is in the $[\l, u]$ interval''. We assume that a finite set of such atomic propositions  are fixed. 

The BLTL formulas are defined via: 
\begin{enumerate}[(i)]
	\item Every atomic proposition and the constants $\emph{true}$, $\emph{false}$ are BLTL formulas.
	\item If $\psi$, $\psi'$ are BLTL formulas then $\lnot \psi$ and $\psi \vee \psi'$  are also BLTL formulas.
	\item If $\psi$, $\psi'$ are BLTL formulas then $\psi \Ult  \psi'$  and $\psi \Ut  \psi'$ are BLTL formulas for any positive integer $t \leq T$.
\end{enumerate}

In addition to the derived propositional operators $\wedge$, $\supset$, $\equiv$, and the temporal operators $\Glt$,  $\Flt$, which are defined in the usual way, we have slightly strengthened BLTL to allow one to specify that a property will hold exactly $t$ time units from now. This enables the construction of properties from experimental time course data.

The semantics of the logic w.r.t. $\sigma, t \models \varphi$, where $\sigma$ is a trajectory in $BEH$ and $t \in \mathcal{T}$ are defined as follows.
\begin{itemize}
\item $\sigma, t \models (i, \l, u)$ iff $\l \leq \sigma(t)(i) \leq u$ where $\sigma(t)(i)$ is the $i^{\mathrm{th}}$ component of the $n$-dimensional vector $\sigma(t) \in \vV$.
\item $\lnot$ and  $\vee$ are interpreted in the usual way.
\item $\sigma, t \models \psi \Ulk \psi'$ iff there exists $k'$ such that $k' \leq k$,\,\, $t + k' \leq T$
and $\sigma, t + k' \models \psi'$. Further, $\sigma, t + k'' \models \psi$ for every $0 \leq k'' < k'$.
\item $\sigma, t \models \psi \Uk \psi'$ iff $t + k \leq T$ and $\sigma, t + k \models \psi'$. Further, $\sigma, t + k' \models \psi$ for every $0 \leq k' < k$.
\end{itemize}

We set $models(\psi) = \{\sigma \, \mid \, \sigma,0 \models \psi, \,\, \sigma \in BEH\}$.

Based on these semantics, we define statements of the form $P_{\ge r}(\psi)$, with the intended meaning ``the probability that a trajectory in $BEH$ is in $models(\psi)$ is at least $r$''. For such a statement to be meaningful  we need  a probability measure over $BEH$ that respects the dynamics of the ODEs system. The key observation here is that the initial (vector) value $\sigma(0)$ taken on at time $t = 0$ completely determines the trajectory $\sigma \in BEH$. We can therefore identify $BEH$ entirely with $\INIT$, the set of initial states. To make this relationship more explicit, we define the set $Models(\psi) \subseteq \INIT$ as
$(\vv, \ww ) \in Models(\psi)$ iff $\sigma_{\vv,\ww} \in models(\psi)$. 

Let $\mathcal{B}_{\INIT}$--from now on written as $\mathcal{B}$--be the $\sigma$-algebra generated by the $(n+m)$-dimensional 
open intervals contained in $\INIT$. We note that $\INIT$ is a member of $\mathcal{B}$. 

\begin{thm}
		(i) Let $\psi$ be a formula of BLTL and $t \in \mathcal{T}$. 
		Let $X_{\psi, t} \subseteq \INIT$ given by $\{(\vv,\ww) \mid \sigma_{\vv,\ww}, t \models \psi\}$. Then $X_{\psi, t} \in \mathcal{B}$.
		
		\noindent (ii) For every BLTL formula $\psi$, $Models(\psi)$ is a member of $\mathcal{B}$.
\end{thm}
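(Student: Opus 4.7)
The plan is to prove (i) by structural induction on the BLTL formula $\psi$, with $t \in \mathcal{T}$ treated as an arbitrary parameter, and then derive (ii) as the special case $t = 0$. The key enabling fact, established in Sect.~\ref{sec:model}, is that for each fixed $t$ the induced flow $\Phi_t : \vV \times \vW \to \vV$ is of class $C^0$, so the map $(\vv,\ww) \mapsto \sigma_{\vv,\ww}(t) = \Phi(t,\vv,\ww)$ is continuous, and therefore Borel measurable, on $\INIT$.

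For the base case I would consider an atomic proposition $(i,\l,u)$. By the semantics, $X_{(i,\l,u),t} = \{(\vv,\ww) \in \INIT \mid \l \le \Phi(t,\vv,\ww)_i \le u\}$, which is the preimage under the continuous map $(\vv,\ww) \mapsto \Phi(t,\vv,\ww)_i$ (the composition of $\Phi_t$ with the $i$-th coordinate projection) of the closed interval $[\l,u]$. This preimage is closed in $\INIT$, and since every closed subset of $\INIT$ lies in $\mathcal{B}$ (as $\mathcal{B}$ is the Borel $\sigma$-algebra on $\INIT$), we get $X_{(i,\l,u),t} \in \mathcal{B}$. The constants $\emph{true}$ and $\emph{false}$ give $\INIT$ and $\emptyset$, both in $\mathcal{B}$.

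For the inductive step the Boolean connectives are immediate: $X_{\lnot\psi,t} = \INIT \setminus X_{\psi,t}$ and $X_{\psi \vee \psi',t} = X_{\psi,t} \cup X_{\psi',t}$, both of which preserve membership in $\mathcal{B}$ by the closure properties of a $\sigma$-algebra. For the temporal operators, the crucial observation is that $k$ and $T$ are integers and $t \in \mathcal{T}$ is finite, so the semantics unfold into finite Boolean combinations of sets of the form $X_{\psi,t+k'}$. Concretely, for the bounded-until,
\begin{equation*}
X_{\psi \Ulk \psi', t} \;=\; \bigcup_{\substack{0 \le k' \le k \\ t+k' \le T}} \Bigl( X_{\psi',\,t+k'} \,\cap\, \bigcap_{0 \le k'' < k'} X_{\psi,\,t+k''} \Bigr),
\end{equation*}
and for the exact-time until,
\begin{equation*}
X_{\psi \Uk \psi', t} \;=\; \begin{cases} X_{\psi',\,t+k} \,\cap\, \bigcap_{0 \le k' < k} X_{\psi,\,t+k'} & \text{if } t+k \le T, \\ \emptyset & \text{otherwise.} \end{cases}
\end{equation*}
By the induction hypothesis each $X_{\psi,t+k''}$ and $X_{\psi',t+k'}$ lies in $\mathcal{B}$, and only finitely many of them appear, so the finite unions and intersections above are in $\mathcal{B}$. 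This closes the induction and establishes (i). Part (ii) then follows because $Models(\psi) = X_{\psi,0}$ by definition.

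The main obstacle, such as it is, lies in the base case: one has to see that BLTL atoms pull back to Borel (in fact closed) subsets of $\INIT$, and this rests entirely on having the $C^0$ regularity of $\Phi_t$ from the ODE theory cited earlier. The rest is bookkeeping enabled by the boundedness of BLTL, which prevents any countable (as opposed to finite) constructions from appearing in the unfolding of the temporal operators.
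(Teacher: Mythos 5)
Your proof is correct and follows essentially the same route as the paper: structural induction on $\psi$, with the base case resting on the $C^0$ continuity of the flow map (so atomic propositions pull back to measurable subsets of $\INIT$), and the Boolean and bounded temporal cases handled by the closure of $\mathcal{B}$ under complement and finite unions/intersections. Your explicit unfolding of the until operators and the identification $Models(\psi) = X_{\psi,0}$ simply spell out steps the paper leaves terse.
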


\begin{proof}
	We establish the first part by structural induction $\psi$. Suppose $\psi = (i, l, u)$ is an atomic proposition.
	Let $I_{\psi} = \{\vv \mid l \leq \vv[i] \leq r \}$. Clearly $I_{\psi}$ is a measurable subset of $\vV$ in the usual sense. Recall from the previous section that 
	$\Phi_{s} : \textbf{V} \times \textbf{W} \rightarrow \textbf{V}$ is $C^0$ and hence 
	$\Phi_{s}^{-1}(I_{\psi}) $ will be a  subset of $\INIT$ and measurable in that it will be a member 
	of  $\mathcal{B}$  for each $s$ in $\mathbb{R}_{+}$.
	But then $\Phi_{t'}^{-1}(I_{\psi}) = X_{\psi, t}$ and this establishes the basis step. The arguments for negation and disjunction follow easily from the induction hypothesis since $\mathcal{B}$ is closed under complement and (countable) union. The $\Ult$ and 
	$\Ut$ cases follow from the induction hypothesis and the fact that $\mathcal{B}$ is closed under finite (in fact countable) intersection. \qed
	
\end{proof}

In what follows we let $P$ be the uniform probability distribution defined over $\mathcal{B}$ in the usual way assuming that there is no prior knowledge.  If such information is available it can be easily incorporated in the definition of $P$ of initial states.   
We shall say that $\mathcal{S}$, the system of ODEs, meets the specification $P_{\geq r} \psi$--and this is denoted $\mathcal{S} \models P_{\geq r} \psi$--iff $P(Models(\psi)) \geq r$, while $\mathcal{S} \models P_{\leq r'} \psi$ iff $P(Models(\psi)) \leq r'$. 

Below we refer to a formula of the form $P_{\geq r} \psi$ as a PBLTL formula.

\vspace{-0.5cm}
\subsection{Verifying PBLTL formulas using statistical model checking}
\vspace{-0.3cm}

In this section, we introduce a statistical method to decide whether a given model satisfies a property of the form $P_{\geq r} \psi$. The method produces an answer that comes with statistical guarantees, determined by a set of user-provided parameters. As an alternative to approximating the probability of satisfaction of $\psi$ directly~\cite{herault2004approximate}, we formulate a hypothesis test to decide whether $\mathcal{S} \models P_{\geq r} \psi$.

As described in~\cite{younes_statistical_2006}, such a test can be posed between a null hypothesis  ${\mathrm{H0} : p \ge r+\delta}$ and an alternative hypothesis ${\mathrm{H1} : p \leq r - \delta}$, where $p = P(Models(\psi))$. The parameter $\delta$ signifies an indifference region in which a decision between $\mathrm{H0}$ or $\mathrm{H1}$ cannot be made. Two additional parameters, $\alpha$ and $\beta$ establish the \emph{strength} of the statistical test by bounding the probability of a Type-I error (verifying a true property as false) and a Type-II error (verifying a false property as true), respectively. Even though the verification is approximate, these user-defined parameters allow setting statistical guarantees on confidence levels and error bounds.

The statistical test begins by taking a random sample from $\INIT$ as the initial state, starting from which the trajectory $\sigma_1$ is generated. The value $y_1$ representing the satisfaction of the property is then constructed as $y_1:=1$ if $\sigma, 0 \models \psi$, and $y_1:=0$ otherwise. This trajectory sampling step is repeated multiple times to generate a sequence of Bernoulli random variables $y_1, y_2, \ldots$. A key aspect of sequential testing is that based on the sequence of samples thus far, it allows deciding if the samples taken are sufficient to meet the required statistical guarantees or whether more samples need to be collected. The quantity $q_m$ is used as stopping criterion for the sequential sampling procedure and is calculated for each $m \geq 1$, after drawing $m$ samples as:
\begin{equation}
q_m = \frac{{[r-\delta]}^{(\sum_{i=1}^{m}y_i)}{[1-[r-\delta]]}^{(m - \sum_{i=1}^{m}y_i)}}{{[r+\delta]}^{(\sum_{i=1}^{m}y_i)}{[1-[r+\delta]]}^{(m - \sum_{i=1}^{m}y_i)}} \quad . \label{eq:fm}
\end{equation}
We stop sampling and accept hypothesis $\mathrm{H0}$ if $q_m \le \widehat{B}$, and accept $\mathrm{H1}$ if $q_m \ge \widehat{A}$. Otherwise, if neither sopping condition is satisfied, we continue sampling. The constants $\widehat{A}$ and $\widehat{B}$ are chosen such that a test of strength $(\alpha,\beta)$ is achieved. An approximation that has been shown to meet the statistical guarantees under most choices of parameters in practice is described in \cite{younes_statistical_2006} as $\widehat{A}=\frac{1-\beta }{\alpha }$ and $\widehat{B}=\frac{\beta }{1- \alpha }$.

Strictly speaking, statistical model checking requires a stochastic dynamical system to define the trajectories. 
In the present setting it will be easy to define a DTMC by transferring the probability measure $P$ to a $\sigma$-algebra over $BEH$ and use it to define a $DTMC$. It is however unnecessary since it will yield the same results as the procedure described above.

\vspace{-0.5cm}
\subsection{SMC based parameter estimation for a single system of ODEs}
\vspace{-0.3cm}
We now describe a method for parameter estimation that builds on the SMC procedure.
We aim to represent two types of properties using BLTL formulas: experimental data and qualitative trends. We assume that experimental data allotted to be used for parameter estimation is available for the subset of variables $O \subseteq \{x_1, x_2, \ldots, x_k\}$. We also assume that the measured values of species $x_i$ at time $t$ are reported as an interval $[\ell^i_t, u^i_t]$ for each $t \in \mathcal{T}_i$ where $\mathcal{T}_i = \{\tau^i_1, \tau^i_2, \ldots, \tau^i_{T_i}\}$ is the set of time points where measurement data is available for $x_i$. In each case, the interval $[\ell^i_t, u^i_t]$ reflects the cell-population based nature, and noisiness of the experimental data. For each $t \in \mathcal{T}_i$, the property $\psi^t_i = \Ft (i, \ell^i_t, u^i_t)$ corresponds to a single measurement from which we construct $\psi^{i}_{exp} = \bigwedge_{t \in \mathcal{T}_i} \psi^t_i$ as a property over all time points. We then combine these properties over all species as $\psi_{exp} = \bigwedge_{i \in O} \psi^i_{exp}$. The case where there are multiple experimental conditions under which $x_i$ has been measured can be handled with an obvious extension of the above encoding scheme.

In addition to data, qualitative dynamical trends for some of the species in the pathway will also be available, often from the literature. For example, we may know that one of the species in the model shows oscillatory behavior with given properties. Similarly, a species may be know to show sustained activation in which it increases to a high level early on and stays at that level without returning to its original value. This type of qualitative information can also be formulated as a set of BLTL formulas which we will refer to as \emph{trend} formulas. 

Given both types of properties, we construct a PBLTL formula $P_{\geq r}(\psi_{exp} \wedge \psi_{qlty})$, where $r$ is the confidence level with which we wish to evaluate the goodness of fit of a given set of parameters to qualitative trends and experimental data. We also choose values for the $(\alpha, \beta)$, and $\delta$ parameters, determining the strength of the test and the indifference region, respectively.

Since both $\psi_{exp}$ and $\psi_{qlty}$ are conjunctions, we will exploit the fact that their terms can be tested separately. To account for this separation, following~\cite{younes_statistical_2006}, we choose the strength of each of the individual tests to be $(\frac{\alpha}{J},\beta)$, where $J$ is the total number of conjuncts in the specification. This guarantees that the strength of the test remains $(\alpha, \beta)$ overall. Testing terms separately also allows us to use the results of individual statistical tests for computing the objective function associated with the global search strategy introduced below.

We assume that the model parameters can be separated into two sets: parameters whose values are unknown, represented as $\Theta_u = \{\theta_1, \theta_2, \ldots, \theta_K\}$, which need to be estimated, and parameters whose nominal values are known and are not subject to variability across a population of cells. Further, we assume that the nominal initial concentrations of species is given as an interval representing the range of their fluctuations as $[L^{init}_i, U^{init}_i]$ for each variable $x_i$. We also choose a constant $\delta''$  to define an interval around the current estimate of the unknown parameter values such that for a parameter vector $\ww \in \prod_{1 \leq j \leq K} [L^j , U^j]$, its value is assumed to vary in the range $[\ww(j) - \delta'', \ww(j) + \delta'']$. Then by setting $L^j_{init, \ww} = \ww(j) - \delta''$ and $U^j_{init, \ww} = \ww(j) + \delta''$, we can define $\INIT_{\ww} = (\prod_i [L^{init}_i, U^{init}_i]) \times (\prod_j [L^j_{init, \ww}  U^j_{init, \ww}])$. The set of trajectories $BEH_{\ww}$ is defined accordingly.

When assessing the goodness of any given $\ww$, we run the SMC procedure--using $\INIT_{\ww}$ instead of $\INIT$--to evaluate $P_{\geq r}(\psi_{exp} \wedge \psi_{qlty})$. We then apply an objective function to the various components of this test and use the objective function value to guide an iterative search strategy over the possible values of $\ww$. But the satisfaction of properties derived from experimental data, and those representing qualitative properties are part of the objective function. In the term quantifying fit to data, statistical tests are evaluated species-wise and then composed by summing over the normalized contribution of each species. The term representing fit to qualitative properties evaluates the number of statistical tests with $\ww$ that resulted in acceptance of the null hypothesis (i.e. the desired outcome).

Let $J^i_{exp}$ ($= T_i$) be the number of conjuncts in $\psi^i_{exp}$, and $J_{qlty}$ the number of conjuncts in $\psi_{qlty}$. Let $J^{i,+}_{exp}(\ww)$ be the number of formulas of the form $\psi^t_i$ (a conjunct in $\psi^i_{exp}$) such that the statistical test for $P_{\geq r}(\psi^t_i)$ accepts the null hypothesis (that is, $P_{\geq r}(\psi^t_i)$ holds) with strength $(\frac{\alpha}{J},\beta)$, where $J = \sum_{i \in O} J^i_{exp} + J_{qlty}$. Similarly, let  $J^{+}_{qlty}(\ww)$ be the number of conjuncts of the form $\psi_{\ell, qlty}$ in n $\psi_{qlty}$ that pass the statistical test $P_{\geq r}(\psi_{\ell, qlty})$ with strength $(\frac{\alpha}{J},\beta)$. Then the objective function $\mathbf{G}(\ww)$ is calculated as:

\begin{equation}
\mathbf{G}(\ww) = J^{+}_{qlty}({\ww}) + \sum_{i \in O} \frac{J^{i,+}_{exp}}{J^i_{exp}}.
\label{objval}
\end{equation}
To summarize, the goodness to fit of $\ww$ is quantified via the number of qualitative properties that are satisfied with it, and the normalized number of data points with which there is agreement within chosen bounds. The form of the objective function also implies that we do not necessarily require that the dynamics predicted by $\ww$ must fit every data point and qualitative property--a property which helps avoid over-fitting the model.

Given the objective function $\mathbf{G}(\ww)$, we use a search algorithm to maximize its value over the possible space of parameters. Our method is orthogonal to the specific choice of search strategy, and can be used in conjunction with both global and local search algorithms. Global search algorithms including Stochastic Ranking Evolutionary Strategy (SRES) \citep{sres} and Genetic Algorithms (GA) \citep{goldberg1989genetic} are more demanding computationally than local approaches but have been shown to be better at avoiding local minima. With global search methods, a population of candidate parameter vectors is maintained in each round of search (rounds are often called \emph{generations}). In our work, we apply the SRES algorithm due to the fact that it has been applied successfully on large pathway models \citep{moles_parameter_2003}.
\vspace{-0.6cm}
\section{Extension of SMC to hybrid automata}
\vspace{-0.3cm}
\label{sec:hybrid-model}
As mentioned in the introduction we extended the SMC based analysis to the much richer setting of hybrid automata. What follows is a condensed account of this extension; derived from the more detailed and complete presentation in \cite{gyori2015}. 
A hybrid automaton has multiple modes of operation and in each mode the system variables evolve continuously according to an associated system of ODEs. When the system enters a designated portion of the state space described by guards associated with the mode transitions, the system will move to a new mode instantaneously where it will start to evolve according to the ODEs associated with the new mode. 
 
Let $\{x_i\}_{i=1}^n$ be $n$-real valued variables viewed as functions of time $x_i(t)$ with  $t \in \mathbb{R}_{+}$. A valuation of $\{x_i\}_{i=1}^n$  is  $\bv \in \mathbb{R}^n$ with $\bv(i) \in \mathbb{R}$ representing the value of $x_i$. The language of \emph{guards} is given by: 
\begin{enumerate}
\item[(i)] $a < x_i $ and $x_i < b$ are guards where $a, b$ are rational numbers. 
\item[(ii)] If $g$ and $g'$ are guards then $g \wedge g'$ and $g \lor g'$ are also guards.
\end{enumerate}

Let $\mathcal{G}$  denote the set of guards. We define the notion of the valuation $\bv$ satisfying  the guard $g$)--denoted $\bv \models g$--as follows: $\bv \models  a < x_i $ iff $a < \bv(i)$; similarly for $x_i < b$. The conjunction and disjunction cases are as expected.   We also define  $\parallel\!\! g\!\! \parallel = \{ \bv \mid \bv \models g\}$. For every guard $g$ we note that $\parallel\!\! g \!\!\parallel$ is an open subset of $\Re^n$. We often abbreviate  $\parallel\!\! g\!\! \parallel$ as $g$.

\begin{definition}\label{def:SHA}
A hybrid automaton is a structure
$H=(Q, q_{in},$ $ \{F_q(\bx)\}_{q\in Q},$ $ \mathcal{G},$ $ \ra,$ $ \INIT)$,
where
\begin{itemize}
\item $Q$ is a finite set of \emph{modes} with $q_{in} \in Q$ as the \emph{initial} mode.
\item $d\vx/dt = F_q(\vx)$ is a system of ODEs for each $q\in Q$. Here $\vx = (x_1, x_2, \ldots,x_n)$ and $F_q = (f^1_q(\vx), f^2_q(\vx),$ $\ldots, f^n_q(\vx))$. We require $f_q$ to be Lipschitz continuous for each $q$.
\item $\ra \subseteq (Q, \mathcal{G}, Q)$ is the mode transition relation. In what follows $(q,g,q')\in \ra$ will be often written it as $q\move{g}q'$.

\item $\INIT = (L_1, U_1) \times  (L_2, U_2) \ldots \times (L_n, U_n)$ is the set of initial states where $L_i < U_i$ and $L_i, U_i$ are rationals.
\end{itemize}
\end{definition}
In Fig.~\ref{heart} we show a hybrid automaton model of a cardiac cell. This automaton has $4$ modes and (by coincidence) $4$  system variables. 
In order to limit the notational overhead we are assuming here that the behavioral variability arises  solely through intervals of values for the initial concentrations of the 
molecular species. Further, we are not associating other features such as invariant conditions or reset conditions \cite{henzinger-lics-survey}. They can be easily handled with some additional effort. We also note we do not require the guards associated with the outgoing transitions of a mode to be disjoint.
In other words we allow non-determinism which will be converted to stochasticity via our definition of the transition probabilities of the Markov chain approximation explained below. This is also brought out in the algorithm for sampling trajectories presented in Sect. \ref{sec:smc_hybrid}.

We choose a time discretization as $t = 0, \Delta, 2 \Delta, \ldots$ where $\Delta$ is a suitably chosen unit time interval. The system's states are assumed to be observed at these discrete time points. We also assume that the number of mode changes that can occur between two consecutive discrete time points is bounded. In biological pathways this is a realistic assumption. In fact, we will assume, for convenience, that in each unit time interval, \emph{no more than one} mode change takes place. Clearly, $\Delta$ can be chosen in multiple ways to meet this requirement, hence it must be chosen carefully. Our method can be easily extended to work with a bounded number of mode transitions occurring in a $\Delta$  interval.  This would, however, complicate the notation and make the main idea less clear--hence our stronger assumption. We further assume that $\Delta = 1$ for technical convenience. Hence, we will use $\{0, 1, 2, \ldots\}$ as the set of discrete time points.

Since $F_q(\vx)$ is Lipschitz continuous the ODE system $d\vx/dt = F_q(\vx)$ has a unique solution  $Z_{q, \vv}(t)$
for each mode $q$ and for each initial value $\vv \in \mathbb{R}^n$ \cite{hirsch2012differential}. This guarantees that $Z_{q, \vv}(t)$ is also Lipschitz continuous and therefore measurable \cite{hirsch2012differential}. 

The \emph{flow} $\Phi_q : (0, 1) \times \mathbb{R}^n \ra \mathbb{R}^n$ in a unit interval is given by $\Phi_q(t, \vv) = Z_{q, \vv}(t)$. Again, $\Phi_q$ will also be Lipschitz continuous.

A (finite) \emph{trajectory} is a sequence
$\tau = (q_0, \vv_0) \, (q_1, \vv_1) \, \ldots (q_k, \vv_k)$ such that for $0 \leq j < k$  the following conditions are satisfied: 

\begin{enumerate}[(i)]
	\item For $0 \leq j < k$, $q_{j} \stackrel{g_j}{\ra} q_{j +1}$ for some guard $g_j$
	\item There exists $t \in (0, 1)$ such that:
	      \begin{itemize}
	      	\item $\Phi_{q_{j}, t}(\vv_{j}) \in g$
	      	\item $\vv_{j + 1} = \Phi_{q_{j +1}, 1-t}(\Phi_{q_{j}, t}(\vv_{j}))$
	      \end{itemize}
\end{enumerate}
  
The trajectory  $\tau$ defined above is said to \emph{start} from $q_0$ and \emph{end} in $q_k$ with $\vv_0$ as its initial state, and $\vv_k$ as its final value state. We denote the set of all finite trajectories that start from an initial value state in $\INIT$ in the initial mode $q_{in}$ as $TRJ$.

\vspace{-0.5cm}
\subsection{The Markov chain associated with a hybrid automaton}\label{sec:mc}
\vspace{-0.3cm}
As before our goal is to assume a probability distribution over $\INIT$ and use it to derive a stochastic version of  
the dynamics of the hybrid automaton. This will open the door for performing statistical analysis. As in the case of a single ODE system we begin assigning a probability distribution over $\INIT$. For convenience we will assume this to be the uniform distribution $\textbf{P}_{\INIT}$ in what follows. In the present much richer setting it will be difficult to directly lift $\textbf{P}_{\INIT}$ to a distribution over the trajectories of $H$. Instead we will first associate a (discrete time) Markov chain $M_H$ with $H$ and  establish a strong relationship between the behaviors of $H$ and $M_H$ relative to (all) properties specified as BLTL formulas. We then develop an SMC procedure for $M_H$ which will amount to an approximate analysis of $H$ but with statistical guarantees.

$M_H$ will be an acyclic finitely branching infinite state DTMC. We shall describe informally how it is built up inductively.
The technical details can be found in \cite{supplementary}.  We start with $(q_{in}, \INIT, \textbf{P}_{\INIT})$ as the initial state of $M_H$. We say that $q_{in}$ is the current mode and $q_{in}$ (viewed as a string  of length $1$) is the path taken from $q_{in}$ to reach this mode. Furthermore $\INIT$ is the set of all the possible initial states that can be taken with 
$\textbf{P}_{\INIT}$ as the distribution over this set. 

Assume inductively that $(\rho, X, \textbf{P}_X)$ is a state of $M_H$  with $q$ as the current mode, 
$\rho$ is the path taken 
from the initial state of $M_H$ to reach this state, $X$ is the set of all the value states that are 
possible in this state and $\textbf{P}_X$ the  probability distribution over $X$. Suppose there are $m$ outgoing transitions $q\move{g_1}q_1,\ldots, q\move{g_m}q_m$ from $q$ in $H$ (Fig. \ref{fig:mc} shows this inductive step).

\begin{figure}[htb]
	\centering
	\begin{tikzpicture}[node distance=0.5cm and 0.5cm,shorten >=1pt,auto,>=stealth',
	mynode/.style={ellipse,draw,font=\normalsize,text width=2.3cm,minimum height=0.9cm,minimum width=1.5cm,align=center}, every node/.style={scale=0.7}]
	\node[mynode](init) at (0,0) {$(q_{in},\INIT,\textbf{P}_\INIT)$};
	\node[draw=none](u1) at (-4,-1.1) {};
	\node[draw=none](u2) at (-2,-1.1) {};
	\node[draw=none](d1) at (2,-1.1) {};
	\node[draw=none](d2) at (4,-1.1) {};
	\node[mynode](r) at (0,-1.1) {$(\rho,X,\textbf{P}_X)$};
	\node[mynode](ru1) at (-4,-2.2) {$(\rho q_1,X_1,\textbf{P}_{X_1})$};
	\node[draw=none](ru2) at (-2,-2.2) {$\ldots$};
	\node[mynode](rr) at (0,-2.2) {$(\rho q_j,X_j,\textbf{P}_{X_j})$};
	\node[draw=none](rd2) at (2,-2.2) {$\ldots$};
	\node[mynode](rd1) at (4,-2.2) {$(\rho q_m,X_m,\textbf{P}_{X_m})$};
	
	\draw[->,decorate,decoration={snake,amplitude=.4mm,segment length=2mm,post length=1mm}](init) to node[right]{$\rho$} (r);
	\draw[->,dashed] (init) to (u1);
	\draw[->,dashed] (init) to (u2);
	\draw[->,dashed] (init) to (d1);
	\draw[->,dashed] (init) to (d2);
	\draw[->] (r) to (ru1);
	\draw[->,dashed] (r) to (ru2);
	\draw[->,dashed] (r) to (rr);
	\draw[->] (r) to node[right,rotate=60, xshift=-13pt]{\Huge{$\times$}} (rd1);
	\draw[->,dashed] (r) to (rd2);
	\end{tikzpicture}
	\caption{The Markov chain construction. The edge between $(\rho,X,\textbf{P}_X)$ and $(\rho q_m,X_m,\textbf{P}_{X_m})$ is marked with a `{\large $\times$}', representing the case where the probability of a transition is 0 since $X_m$ has measure 0. Thus, $(\rho q_m,X_m,\textbf{P}_{X_m})$  will not be a state of the Markov chain. Reprinted by permission from Springer \cite{gyori2015}.}
	\label{fig:mc}
	%\vspace{-0.5cm}
\end{figure}
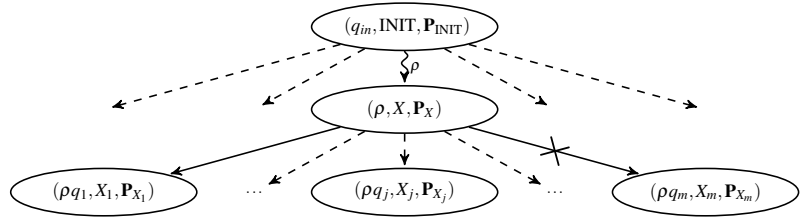

Then for $1 \leq j \leq m$ we define the triples $(\rho q_j, X_j, \textbf{P}_{X_j})$ as follows. For convenience, let $j$  range over $\{1, 2, \ldots, m\}$ in what follows.

We define $\tm_j(\vv) \subseteq (0, 1)$ to be 
time points $t$  in $(0, 1)$ at which the guard $g_j$ is satisfied if the system starts in mode $q$  from value $\vv$ and evolves according to the mode dynamics of $q$ up to time $t$. 

For each $j$, we define $X_j$ to be the set of state values obtained by starting at time $k$ from some value $\vv \in X$ , then following the dynamics of $q$ to evolve up to $k + t$, switching instantaneously to mode $q_j$ at that time point, and then following the dynamics of the new mode $q_j$ up to time $k+1$.

Finally  we define $\bP_{X_j}$, the probability distribution over $X_j$ as follows. Informally, the probability mass of $\by$ in $X_j$ under $\bP_{X_j}$ is the sum of the  probability masses  
of points $\textbf{z}$ in $X$ under $\bP_X$ from which one can evolve according to the dynamics of $q$ up to some time point
$t$  in $\tm_j(\textbf{z})$, transition to mode $q_j$ and evolve for time $1-t$ to reach $\by$. We next define the transition probability $M_H((\rho, X, \bP_X), (\rho q_j, X_j, \bP_{X_j}))$ to be proportional to the sum of $\tm_j(\textbf{v})$ taken over $\textbf{v} \in \textbf{X}$.  The precise details are more involved; for instance one must use integrals instead of  summations \cite{supplementary}.

\vspace{-0.5cm}
\subsection{Relating the behaviors of \large{\bf $H$} and \large{\bf $M_H$} using BLTL}\label{pbltl}
\vspace{-0.3cm}
As before we shall use BLTL as a specification language for properties bounded in time, and use it as a basis for relating the behaviors of $H$ and $M_H$. For convenience, in what follows, we will write $M$ instead of $M_H$.

We assume a finite set of atomic propositions $AP$ and a valuation function $Kr: Q \ra 2^{AP}$ with BLTL formulas defined as in Section \ref{sec:smc_bltl}. In the setting of hybrid automata we will be reasoning mainly about properties regarding the sequences of modes generated by the automaton. This is so since the results in \cite{ASTY}--based on constructions developed in \cite{HKPV}--show that in the presence of the non-linear  dynamics we are dealing with here, even simple quantitative propositions will result in undecidability of mode reachability. Furthermore as the properties specified in our various case studies show (see also \cite{supplementary}), many biologically relevant properties can be captured purely in terms of mode sequences. 

As before, we assume that there exists a maximum time point $K$ up to which the behavior of the system is assumed to be of interest. For a BLTL formula $\psi$, we recall there is a constant $K_{\psi}$ depending only on $\psi$ such that evaluating an execution trace up to at most $K_{\psi}$ suffices to determine the satisfaction of $\psi$~\cite{TACAS99}. Accordingly, we assume that a sufficiently high value of $K$ has been chosen such that it can handle all the specifications of interest. We denote by $TRJ^{K+1}$ the set of trajectories of length $K + 1$. It represents the behavior of $H$ of interest.

Turning to the corresponding notion for $M$, a finite path in $M$ is  a sequence $\eta_0 \eta_1 \ldots \eta_k$ such that $\eta_j \in \Upsilon$ for $0 \leq j \leq k$. In addition,  for $0 \leq j < k$, $\eta_j \stackrel{p_j}{\Rightarrow} \eta_{j + 1}$ for some $p_j \in (0, 1]$. We say such a path starts from $\eta_0$. Its length is $k + 1$. Then, $\paths_M$ is defined as the set of finite paths starting from the initial state of $M$ and $\paths^{K + 1}_M$ to be the set of paths in $\paths_M$ of length $K + 1$.

\runinhead{The trajectory semantics}
Let $\psi$ a BLTL formula and $\tau = (q_0, \vv_0)$ $(q_1, \vv_1)$ $\ldots$  $(q_k, \vv_k)$ a finite trajectory and $0\leq j \leq K$. Then $\tau, j \models_H \psi$  in case:
\begin{itemize}
	\item If $A$ is an atomic proposition then $\tau, j \models_H  A$ iff $A \in Kr(q_j)$.
	\item The cases $\lnot$ and  $\vee$ are handled in the expected way.
	\item $\tau,j \models_H \psi \U^{\leq {\ell}} \psi'$ iff there exists $m$ such that $m \leq \ell$ and $j + m \leq k$ and $\tau, (j + m) \models_H\psi'$. Furthermore, for every $0 \leq m' < m$ it is the case $\tau,(j + m') \models_H \psi$ .
\end{itemize}

The notion $models_H(\psi) \subseteq TRJ^{K + 1}$ is then defined via: $\tau \in models_H(\psi)$ iff $\tau, 0 \models_H \psi$. We assert $H$ \emph{meets the specification} $\psi$ -denoted $H \models \psi$- iff $models_H(\psi) = TRJ^{K + 1}$.

\runinhead{The Markov chain semantics}
Suppose $\pi= \eta_0 \eta_1 \ldots \eta_k$ is a path in $M$ where for $0 \leq j \leq k$ we have  $\eta_j = (\rho q_j, X_j, \textbf{P}_{X_j})$.
Let $\psi$ be a BLTL formula and $0 \leq j \leq k$.  Then  $\pi, j \models_M \psi$ iff:
\begin{itemize}
	\item If $A$ is an atomic proposition then $\pi, j \models_M  A$ iff $A \in Kr(q_j)$.
	\item The remaining clauses are just as they were for  $\models_H$. 
\end{itemize}

We now define $models_M(\psi) \subseteq \paths^{K + 1}_M$ as follows: $\pi \in models_M(\psi)$ iff $\pi, 0 \models_M \psi$.
The probability of a formula being satisfied in $M$ can now be defined. First  if $\pi = \eta_0 \eta_1 \ldots \eta_K$ is in $\paths^{K + 1}_M$ then $\Prob(\pi) = \prod_{0 \leq \ell < K} p_{\ell}$,
where $\eta_{\ell} \stackrel{p_{\ell}}{\Rightarrow} \eta_{\ell + 1}$ for $0 \leq \ell < K$. Now we define:
\begin{equation*}
	\Prob(models_M(\psi)) = \sum_{\pi \in models_M(\psi)} \Prob(\pi).
\end{equation*}
$\mc \models \psi$ will  denote $\Prob(models_M(\psi)) = 1$. As usual we will write $\Prob_{\geq p}(\psi)$ instead of $\Prob(models_M(\psi))$ $ \geq p$. It is useful to note that $Pr(\pi) > 0$ for every $\pi \in models_M(\psi)$. Moreover $\sum_{\pi \in models_M(\psi)} \Prob(\pi) \leq 1$. This in turn leads to $\Prob_{\geq 1}(\psi)$ iff $ models_M(\psi) = \paths^{K + 1}_M$ iff $M \models \psi$.

We note that $LTL$ was interpreted over Markov chains in \cite{vardi85}. However, this was a qualitative semantics in that only the notion of an $LTL$ formula being satisfied with probability $1$ by a Markov chain was defined. Here we deal with all probabilities.

\runinhead{The correspondence result}
Our aim is to establish  that $H$ meets the specification $\psi$ iff $\Prob_{\geq 1}(\psi)$. To do so,  let $\pi= \eta_0 \eta_1 \ldots \eta_k$ be a path in $M$ with $\eta_j = (q_0q_1\ldots q_j, X_j, \textbf{P}_{X_j})$ for $0 \leq j \leq k$. Let $\tau = (q'_0, \vv_0)$ $(q'_1, \vv_1) \ldots(q'_{k'}, \vv_{k'})$ be a trajectory. Then define $\pi$ and $\tau$ to be  \emph{compatible} just in case $k = k'$ and $q_j = q'_j$ and $\vv_j \in X_j$ for $0 \leq j \leq k$. The following lemma 
whose proof can be found in \cite{supplementary} easily leads to the correspondence result.

\begin{lemma}
	\label{lemmapbltl}
	\begin{enumerate}
		\item Suppose the path $\pi= \eta_0 \eta_1 \ldots \eta_k$ in $M$ and the trajectory $\tau = (q_0, \vv_0)$ $(q_1, \vv_1) \ldots(q_k, \vv_k)$ are compatible. Let $0 \leq j \leq k$ and $\psi$ be a BLTL formula. Then $\pi, j \models_M \psi$ iff $\tau, j \models_H \psi$.
		\item Suppose $\pi$ is a path in M. Then there exists a trajectory $\tau$  such that $\pi$ and $\tau$ are compatible. Furthermore if $\pi \in \paths_M$ then $\tau \in TRJ$.
		\item Suppose $\tau$ is a trajectory. Then there exists a path $\pi$ in $M$ such that $\tau$ and $\pi$ are compatible. Furthermore if $\tau \in TRJ$ then $\pi \in \paths_M$.
	\end{enumerate}
\end{lemma}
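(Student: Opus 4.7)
The plan is to handle each of the three parts in turn, with Part 1 by straightforward structural induction on $\psi$ and Parts 2, 3 by a careful unpacking of the Markov chain construction together with continuity of the flow.

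For Part 1, I would induct on the structure of $\psi$. The base case is an atomic proposition $A$: by compatibility we have $q'_j = q_j$, so $A \in Kr(q_j)$ iff $A \in Kr(q'_j)$, and the two semantics coincide at position $j$ because atomic propositions are evaluated purely on modes. The Boolean cases $\lnot$ and $\vee$ are immediate from the induction hypothesis applied at the same $j$. For the bounded until $\psi \U^{\le \ell} \psi'$, the witness index $m \le \ell$ with $j + m \le k$ is the same on both sides: the compatibility of $\pi$ and $\tau$ transfers to the shifted subsequences $\eta_0\ldots\eta_{j+m}$ and $(q_0,\vv_0)\ldots(q_{j+m},\vv_{j+m})$ because the mode coordinates match and $\vv_{j+m'} \in X_{j+m'}$ for every $0 \le m' \le m$. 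Applying the induction hypothesis to $\psi$ and $\psi'$ at the relevant positions closes the case.

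For Part 2, given $\pi = \eta_0\eta_1\ldots \eta_k$ in $\paths_M$ with $\eta_j = (q_0q_1\ldots q_j, X_j, \textbf{P}_{X_j})$ and $\eta_0 = (q_{in}, \INIT, \textbf{P}_{\INIT})$, I would construct a compatible $\tau$ by a backward selection argument. The definition of the Markov chain construction says that each $X_{j+1}$ is exactly the image of the set of witnessing points in $X_j$ under the two-stage flow (evolve in $q_j$ until hitting $g_j$ at some $t \in (0,1)$, then evolve in $q_{j+1}$ for time $1-t$), and the transition $\eta_j \edges \eta_{j+1}$ occurs in $M$ precisely because this image has positive measure. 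Starting from any $\vv_k \in X_k$ (non-empty since $X_k$ has positive measure), I trace backwards: $\vv_k$ has a pre-image pair $(\vv_{k-1}, t_{k-1}) \in X_{k-1}\times (0,1)$ satisfying the guard-crossing condition, by the very definition of $X_k$. Iterating yields $\vv_{k-1}, \vv_{k-2},\ldots, \vv_0$ with $\vv_j \in X_j$ and the required guard and flow conditions, which is exactly the definition of a trajectory. Since $\vv_0 \in X_0 = \INIT$, the resulting $\tau$ lies in $TRJ$ whenever $\pi \in \paths_M$.

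For Part 3, I would run the construction in the forward direction: given $\tau = (q_0,\vv_0)\ldots(q_k,\vv_k)$ with $\vv_0 \in \INIT$, set $\eta_j := (q_0 q_1\ldots q_j, X_j, \textbf{P}_{X_j})$ using exactly the inductive clause of Section~\ref{sec:mc}. The only non-trivial point is that each edge $\eta_j \edges \eta_{j+1}$ really exists in $M$, i.e.\ has positive transition probability. This follows because the trajectory itself supplies a witness $\vv_j \in X_j$ with a crossing time $t_j \in (0,1)$; Lipschitz continuity of $\Phi_{q_j}$ and openness of the guard set $g_j$ imply that a whole neighbourhood of $\vv_j$ in $X_j$ shares this crossing property, so the corresponding measure integral defining $M_H(\eta_j,\eta_{j+1})$ is strictly positive. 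Hence $\eta_{j+1}$ is not pruned and the full path $\pi$ is in $\paths_M$; compatibility is immediate from $q'_j = q_j$ and $\vv_j \in X_j$ by construction.

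I expect Part 1 to be purely bookkeeping and the real work to lie in Parts 2 and 3, specifically in verifying that ``witness trajectory existence'' and ``positive-measure witness set'' coincide. The crux is a measure-theoretic continuity argument: a point satisfying the crossing condition for an open guard sits inside a positive-measure set of such points because the flow is Lipschitz and the guard region is open, so a crossing at $(\vv_j, t_j)$ persists on a neighbourhood of $(\vv_j, t_j)$. This is precisely the property that prevents pruning in the Markov chain construction and makes the forward/backward selections in Parts 2 and 3 go through.
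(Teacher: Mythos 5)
The paper does not actually contain its own proof of this lemma---it is deferred to the supplementary material \cite{supplementary}---so your plan can only be judged against the constructions of Sect.~\ref{sec:mc}. Against those, Parts~1 and~2 are sound and are the natural arguments: atomic propositions are evaluated on modes only, so compatibility (equal mode sequences) makes the structural induction for Part~1 routine, and for Part~2 the backward selection works because each $X_{j+1}$ is, by definition, the image of the witness pairs in $X_j\times(0,1)$ under the two-stage flow, so every $\vv_{j+1}\in X_{j+1}$ has a pre-image, and the existence of the edge in $M$ guarantees the sets are nonempty; with $X_0=\INIT$ this gives $\tau\in TRJ$ whenever $\pi\in\paths_M$. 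One presentational point: Parts~2 and~3 are stated for arbitrary paths and trajectories, not only those anchored at $(q_{in},\INIT,\textbf{P}_{\INIT})$; your constructions apply verbatim starting from any state of $M$, but you should say so rather than fixing $\eta_0$ to be the initial state.

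The one step that needs more than you give is in Part~3, precisely at what you call the crux. Openness of $g_j$ and continuity of $\Phi_{q_j}$ do give that the witness set $\{(\vv,t):\Phi_{q_j}(t,\vv)\in g_j\}$ is open, contains a neighbourhood of $(\vv_j,t_j)$, and hence has positive Lebesgue measure. But the transition probability $M_H(\eta_j,\eta_{j+1})$ is an integral of the Lebesgue measure of $\tm_j(\vv)$ against $\textbf{P}_{X_j}$, so strict positivity additionally requires that $\textbf{P}_{X_j}$ assigns positive mass to every nonempty relatively open subset of $X_j$ (i.e.\ that $X_j$ lies in the support of $\textbf{P}_{X_j}$). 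This is not automatic and must itself be proved by an induction along the chain construction: $\textbf{P}_{\INIT}$ is uniform on an open box, and each $\textbf{P}_{X_{j+1}}$ is the normalized pushforward of $\textbf{P}_{X_j}\times\lambda$ restricted to the (relatively open) witness set under a continuous map, and pushforwards of full-support measures under continuous maps have support containing the image of the support. With that auxiliary induction added, your neighbourhood argument does yield a positive transition probability, the state $\eta_{j+1}$ is not pruned, and Part~3 goes through; without it, positive Lebesgue measure of the neighbourhood does not by itself imply positive $\textbf{P}_{X_j}$-measure.
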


We then have:

\begin{theorem}
	\label{thm:main}
	$H \models \psi$ iff $M \models \psi$.
\end{theorem}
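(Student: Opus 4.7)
The plan is to derive Theorem~\ref{thm:main} as an essentially immediate corollary of Lemma~\ref{lemmapbltl}. Unpacking the definitions, $H \models \psi$ means $models_H(\psi) = TRJ^{K+1}$, while $M \models \psi$ means $\Prob(models_M(\psi)) = 1$; and as noted in the preceding paragraph, the latter is equivalent to $models_M(\psi) = \paths^{K+1}_M$, because every path in $\paths^{K+1}_M$ has strictly positive probability (since each edge is taken with positive probability by the construction of $M$, otherwise the target triple would not be a state of $M$). So the task reduces to proving the set equality $models_H(\psi) = TRJ^{K+1}$ iff the set equality $models_M(\psi) = \paths^{K+1}_M$.

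For the forward direction, I assume $H \models \psi$ and pick an arbitrary $\pi \in \paths^{K+1}_M$. By part~(2) of Lemma~\ref{lemmapbltl}, there is a trajectory $\tau \in TRJ^{K+1}$ compatible with $\pi$ (the length of $\tau$ matches the length of $\pi$ by the definition of compatibility, so $\tau$ lands in $TRJ^{K+1}$). Since $\tau \in models_H(\psi)$ by hypothesis, part~(1) of the lemma gives $\pi \models_M \psi$ at position $0$, hence $\pi \in models_M(\psi)$. As $\pi$ was arbitrary, $models_M(\psi) = \paths^{K+1}_M$, and so $M \models \psi$.

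For the converse, assume $M \models \psi$ and pick an arbitrary $\tau \in TRJ^{K+1}$. By part~(3) of Lemma~\ref{lemmapbltl}, there is a path $\pi \in \paths_M$ compatible with $\tau$; compatibility forces $\pi$ to have length $K+1$, so $\pi \in \paths^{K+1}_M$. The assumption $M \models \psi$ together with the positivity remark above yields $\pi \in models_M(\psi)$, and then part~(1) of the lemma transports satisfaction back to $\tau$, giving $\tau \in models_H(\psi)$. Hence $models_H(\psi) = TRJ^{K+1}$ and $H \models \psi$.

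The real content, and the only nontrivial step, is Lemma~\ref{lemmapbltl} itself, which I would invoke as a black box here (its proof is in the supplementary material). The main subtlety to flag is the positivity argument that lets us replace $\Prob(models_M(\psi)) = 1$ by $models_M(\psi) = \paths^{K+1}_M$: one direction is immediate, and the other uses that if some $\pi \in \paths^{K+1}_M$ were not a model then, because $\Prob(\pi) > 0$, the total probability of models would drop strictly below $1$. Once this observation is in hand, the theorem is a two-line application of the lemma in each direction.
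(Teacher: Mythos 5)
Your proof is correct and takes essentially the same route as the paper: both directions are the paper's own argument, merely stated directly rather than contrapositively (the paper derives a violating path from a violating trajectory via part (3) of Lemma~\ref{lemmapbltl} and vice versa via part (2), transferring satisfaction with part (1)). The positivity observation you flag---that $\Prob(models_M(\psi)) = 1$ iff $models_M(\psi) = \paths^{K+1}_M$---is likewise already recorded in the paper in the paragraph preceding the theorem, so nothing is missing.
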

\begin{proof}
	Assume that  $H$ does not meet the specification $\psi$. This implies the existence of $\tau \in TRJ^{K+1}$ such
	that $\tau, 0 \not\models_H \psi$. By point (3) of Lemma~\ref{lemmapbltl} there must exist $\pi \in \paths^{K+ 1}_M$
	which is compatible with $\tau$. By point (1) of Lemma~\ref{lemmapbltl} we then have $\pi \notin models_M(\psi)$ which at once implies $Pr_{< 1}(\psi)$.
	
	Next assume $Pr_{< 1}(\psi)$. This implies there exists $\pi \in \paths^{K + 1}$ such that $\pi, 0 \not\models_M \psi$.
	By point (2) of Lemma~\ref{lemmapbltl} there must exist $\tau \in TRJ^{K + 1}$ that is compatible with $\pi$. This implies $\tau, 0 \not\models_H \psi$ by point (1) of Lemma~\ref{lemmapbltl}. This in turn implies that $H$ does not meet the specification $\psi$.
\end{proof}
\runinhead{The SMC procedure}\label{sec:smc_hybrid}
To check whether $H$ meets the specification $\psi$, we will equivalently check  whether $\Prob_{\ge 1}(\psi)$ on $M$.
As mentioned in Sect. \ref{sec:introduction} however even a finite initial portion of  $M$ cannot be constructed explicitly; to do so one will
need the solutions to the ODEs which will not be available.
We overcome this  by using randomly generated trajectories of $H$ in such a way that they can serve as proxies for randomly generated paths of $M$ and in this light submit them to a sequential hypothesis test. This will enable us to decide --with bounds on the rate of error--whether the hypothesis $H_0$ corresponding to $\Prob_{\ge 1}(\psi)$ holds or whether $H_1$ corresponding to $\Prob_{< 1-\delta}(\psi)$ holds. Our trajectory sampling procedure is described as Algorithm \ref{algtrajsim}.
\vspace{-0.2in}
\begin{algorithm}[H]
	\scriptsize
	\renewcommand{\thealgorithm}{1}
	\caption{Trajectory simulation}
	\label{algtrajsim}
	Input: Hybrid automaton $H=(Q, q_{in}, \{F_q(\bx)\}_{q\in Q}, \mathcal{G}, \ra, \INIT)$, maximum time step $K$.
	
	Output: Trajectory $\tau$
	\begin{algorithmic}[1]
		\State Sample $\vv_0$ from $\INIT$ uniformly, set $q_0:=q_{in}$ and $\tau:=(q_0,\vv_0)$.
		\For{$k:=1 \ldots K$}
		\State Generate time points  $T:=\{t_1,\ldots,t_J\}$ uniformly in $(0,1)$.
		\State Simulate $\vv^j := \Phi_{q_{k-1}}(t_j,\vv_{k-1})$, for $j\in\{1,\ldots,J\}$
		\State Let $\widehat{\tm}_j:=\{t \in T: \vv^j \in g_j\}$ be the time points where $g_j$ is enabled.
		\State Pick $g_{\ell}$ randomly according to probabilities $\{p_j := |\widehat{\tm}_j| / \sum_{i=1}^m |\widehat{\tm}_i|\}$.
		\State Pick $t_{\ell}$ uniformly at random from $\widehat{\tm}_{\ell}$.
		\State Simulate $\vv':=\Phi_{q'}(1-t_{\ell},\vv^{\ell})$, where $q'$ is the target of $g_{\ell}$.
		\State Set $q_k := q'$, $\vv_k := \vv'$, and extend $\tau := (q_0,\vv_0)\ldots(q_k,\vv_k)$.
		\EndFor
		\State \textbf{return} $\tau$
	\end{algorithmic}
\end{algorithm}
\vspace{-0.2in}

The proof of  correctness of this sampling procedure can be found in \cite{supplementary}.

We can now construct Algorithm \ref{alghyptest} which carries out the sequential hypothesis test. It uses Algorithm \ref{algtrajsim} to repeatedly generate a random trajectory, and decide between  $H_0$ and $H_1$ after a finite number of samples. The key parameters for the algorithm are the user-defined false positive rate $\alpha$, and the indifference interval $\delta$.
\vspace{-0.2in}
\begin{algorithm}[H]
	\scriptsize
	\renewcommand{\thealgorithm}{2}
	\caption{Sequential hypothesis test}
	\label{alghyptest}
	Input: BLTL property $\psi$, indifference parameter $\delta$, false positive bound $\alpha$.
	
	Output: $H_0$ or $H_1$.
	\begin{algorithmic}[1]
		\State Set $N := \lceil \log \alpha/\log (1-\delta) \rceil$
		\For{$i:=1 \ldots N$}
		\State Generate a random trajectory $\tau$ using Algorithm \ref{algtrajsim}
		\State \textbf{if} {$\tau, 0 \models^H \psi$} \textbf{ then } Continue
		\State \textbf{else return} $H_1$
		\EndFor
		\State \textbf{return} $H_0$
	\end{algorithmic}
\end{algorithm}
\vspace{-0.3in}

The fact that Algorithm \ref{alghyptest} has the required accuracy is shown in the following theorem.
\begin{theorem}
	\label{thm:smc} If $H_0$ is true then the probability of choosing $H_1$ (false negative) is $0$. On the other hand  suppose $N \ge \log\alpha/\log (1-\delta)$. Then the probability of choosing $H_0$ when $H_1$ is true (false positive) is less or equal to $\alpha$.
\end{theorem}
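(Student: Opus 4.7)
The plan is to prove the two parts separately, leveraging the correspondence Theorem~\ref{thm:main} for part~(i) and a standard Bernoulli-trials calculation, built on top of the correctness of Algorithm~\ref{algtrajsim} (cited as proved in \cite{supplementary}), for part~(ii).

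For the false-negative claim, I would argue as follows. Under $H_0$ we have $\Prob_{\ge 1}(\psi)$ on $M$, which by Theorem~\ref{thm:main} is equivalent to $H \models \psi$, i.e.\ $models_H(\psi) = TRJ^{K+1}$. By point (2) of Lemma~\ref{lemmapbltl}, every path $\pi$ produced by Algorithm~\ref{algtrajsim} is compatible with some trajectory $\tau \in TRJ^{K+1}$, and by the correctness of Algorithm~\ref{algtrajsim} the sampled $\tau$ itself lies in $TRJ^{K+1}$. Since every such trajectory satisfies $\psi$, the condition $\tau,0 \models^H \psi$ on line~4 of Algorithm~\ref{alghyptest} holds for every iteration, so the algorithm never branches into \textbf{return} $H_1$ and returns $H_0$ with probability~$1$.

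For the false-positive claim, I would first invoke the correctness of Algorithm~\ref{algtrajsim} to assert that the trajectories $\tau_1,\ldots,\tau_N$ generated in the $N$ iterations are independent samples whose associated paths in $M$ are drawn according to the Markov-chain measure $\Prob$ on $\paths^{K+1}_M$. Combining Lemma~\ref{lemmapbltl}(1) with this, the indicator $Y_i := \mathbf{1}[\tau_i,0 \models^H \psi]$ is Bernoulli with success probability exactly $\Prob(models_M(\psi))$. Under $H_1$, i.e.\ $\Prob_{<1-\delta}(\psi)$, we have $\mathbb{P}(Y_i = 1) < 1 - \delta$, so by independence
\begin{equation*}
\mathbb{P}(\text{Algorithm 2 returns } H_0) \;=\; \mathbb{P}(Y_1 = \cdots = Y_N = 1) \;<\; (1-\delta)^N.
\end{equation*}
The choice $N \ge \log\alpha / \log(1-\delta)$, combined with $\log(1-\delta) < 0$, yields $N \log(1-\delta) \le \log \alpha$, hence $(1-\delta)^N \le \alpha$, giving the desired bound.

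The only genuinely non-trivial step is the appeal to correctness of Algorithm~\ref{algtrajsim}, namely that the random trajectory it produces has the same distribution over mode sequences (and, conditionally, over value states) as a path sampled from $M$ according to $\Prob$. This is where the construction of the transition probabilities via the measures of the sets $\tm_j(\vv)$ must be matched, in distribution, by the finite-sample estimator $p_j = |\widehat{\tm}_j|/\sum_i |\widehat{\tm}_i|$ and the uniform pick of $t_\ell$ from $\widehat{\tm}_\ell$; the remainder is a routine Bernoulli tail-bound calculation. Since this sampling correctness is deferred to \cite{supplementary}, the argument above can be presented cleanly in two short paragraphs.
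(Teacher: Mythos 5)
Your proposal is correct and follows essentially the same route as the paper: part (i) reduces, via Theorem~\ref{thm:main} and the trajectory--path correspondence, to the observation that every sampled trajectory satisfies $\psi$ (which the paper dismisses as ``trivial''), and part (ii) is the same Bernoulli bound, namely that under $H_1$ the probability that all $N$ samples satisfy $\psi$ is at most $(1-\delta)^N \le \alpha$ for $N \ge \log\alpha/\log(1-\delta)$. Your version is simply more explicit about the dependence on the correctness of the trajectory-sampling procedure (Algorithm~\ref{algtrajsim}), which the paper, like you, defers to the supplementary material.
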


\begin{proof}
	The first part is trivial. To see the second part holds, assume $H_1$ is true, which implies that $\Prob_{<1-\delta}(\psi)$. We also know that the probability of $N$ sampled trajectories all satisfying $\psi$, thus leading to $H_0$, a false positive, is at most $(1-\delta)^N$. This implies  $\alpha \le (1-\delta)^N$, which at once leads to  $N \ge \log\alpha/\log (1-\delta)$. \qed
\end{proof}

Due to this result we use choose the sample size using the formula $N := \lceil \log\alpha / \log (1-\delta) \rceil$. As an example, if $\alpha = 0.01$ and $\delta = 0.05$, we get $N = 90$, while for $\alpha = 0.001$ and $\delta = 0.01$, we get $N = 688$.

\vspace{-0.5cm}
\section{Applications} 
\vspace{-0.3cm}
\label{sec:results}
The performance of our methods has been extensively tested in \cite{palaniappan2013,gyori2015,Ramanathan2015} using models taken from the literature \cite{elowitz2000synthetic,pourquie08,thrombin,brown2004statistical} (see Supplementary Information \cite{supplementary}). 
We have also applied our methods in collaboration with biologists to study various biological systems, ranging from innate immune pathways \cite{Liu2016} to cell death/survival pathways \cite{Liu2014,Liu2017,kagan2017}. Here we report on a selected list of applications of our SMC based analysis framework. We begin with models based on a single system of ODEs in Sect. \ref{sec:ode_apps}
and then describe applications to hybrid systems in Sect. \ref{sec:hybrid_apps} with further case studies and details in \cite{supplementary}.

\vspace{-0.5cm}
\subsection{SMC based analysis of ODE models}
\label{sec:ode_apps}
\vspace{-0.3cm}

\paragraph{\textbf{The MLC phosphorylation pathway}}
\vspace{-0.2cm}
We applied our SMC based parameter estimation procedure to the myosin light chain (MLC) phosphorylation pathway model \cite{mlc}. MLC phosphorylation is crucial to smooth muscle contraction. The sustained phosphorylation of MLC is associated with abnormal smooth muscle contraction, which can cause various diseases including hypertension and vasospasm \cite{mlc2}. The detailed reaction scheme of the thrombin-induced MLC phosphorylation pathway model can be found in  \cite{supplementary}. 

The model consists of $105$ ODEs. We assumed that $100$ (out of $197$) parameters are unknown in order to illustrate the scalability of our methods. By taking into account the cell-cell variability, we sampled the assumed distributions of initial concentrations and simulated the population based experimental observations for $10$ species at $20$ time points ranging from $0$ to $1000$ seconds. We used the time course data of regulator of G protein signaling 2 (RGS2), Ca$^{2+}$, guanine nucleotide exchange factor for Rho (p115RhoGEF), Rho GTPase, the protein kinase C (PKC)-diacylglycerol (DAG) complex, myosin light chain 2 (MLC2), protein phosphatase 1 regulatory subunit 14A (CPI-17), and MLC phosphatase (MYPT1-PPase),  together with the dynamic trends of the activated thrombin receptor and  inositol trisphosphate receptor (IP3R) to estimate unknown parameters. We reserved the data regarding the MLC-Rho-kinase complex and the Rho-kinase-MYPT1 complex for the test data set. 

Parameter estimation took $48.8$ hours on a machine with an Intel Core i7 3.4Ghz processor and 8GB memory. As the subset of results shown in Fig.~\ref{Thrombin:train} demonstrate, model simulations with the estimated parameters match both the training (Fig.~\ref{Thrombin:train}(a)) and test data (Fig.~\ref{Thrombin:train}(b)).

\begin{figure}
\begin{minipage}[b]{0.4\linewidth}
\centering
\includegraphics[scale=0.45]{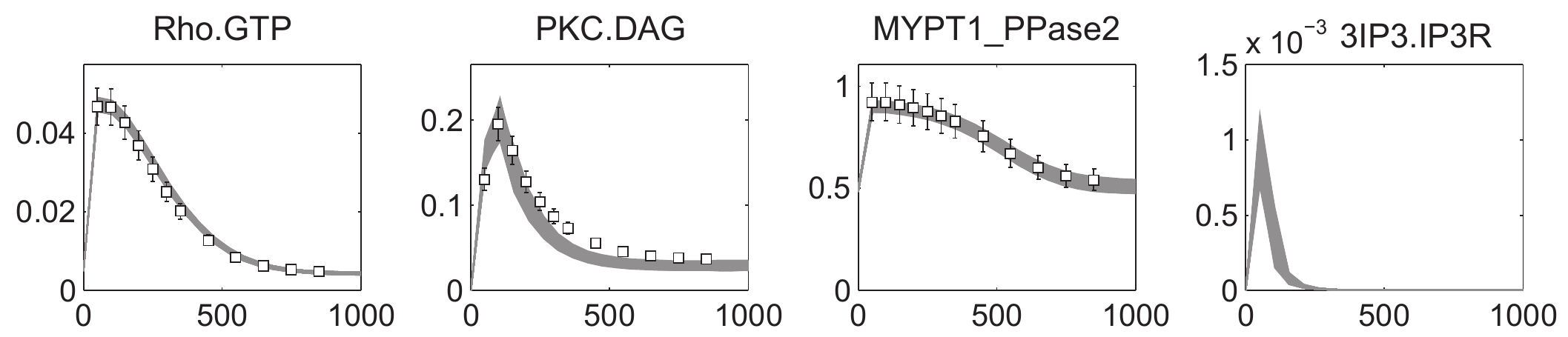}\\
\hspace{4cm} (a)
\end{minipage}
\hspace{4cm}
\begin{minipage}[b]{0.35\linewidth}
\centering
\includegraphics[scale=0.41]{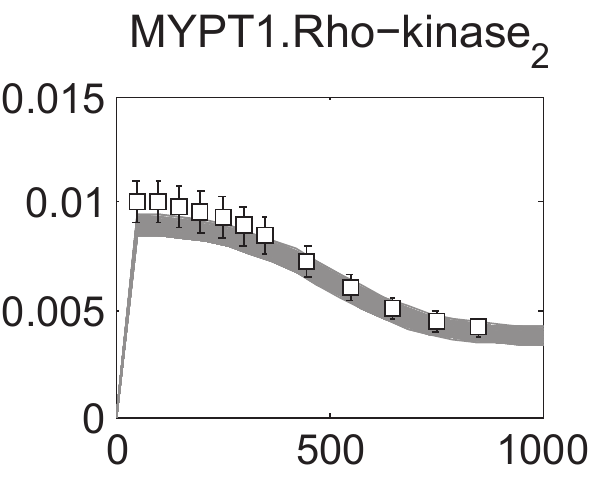}\\
(b)
\end{minipage}
\caption{Parameter estimation results. (a) Model simulations v.s. training data (b) Mode simulations v.s. test data. Reprinted by permission from \textit{Springer} \cite{palaniappan2013}.}
\vspace{-0.2cm}
\label{Thrombin:train}
\end{figure}

\paragraph{\textbf{The TLR3-TLR7 pathways crosstalk}}
\vspace{-0.2cm}
In collaboration with biologists, we studied the Toll-like receptor (TLR) pathways \cite{Liu2016}. TLRs are key to the innate immune system. They can recognize a broad range of microbial pathogens and stimulate the innate immune response to protect the host. It was previously believed that innate immune system has no ``memory''. However, we observed that macrophages can prime themselves after the first round of pathogen attack and induce a ``cytokine storm'' to combat  the subsequent intruding pathogens. Using a combined computational and experimental approach, we found that the short-term innate immunological memory in macrophages may be conferred by the TLR3 and TLR7 pathways crosstalk. 

\begin{figure}
\centering
\includegraphics[scale=0.65]{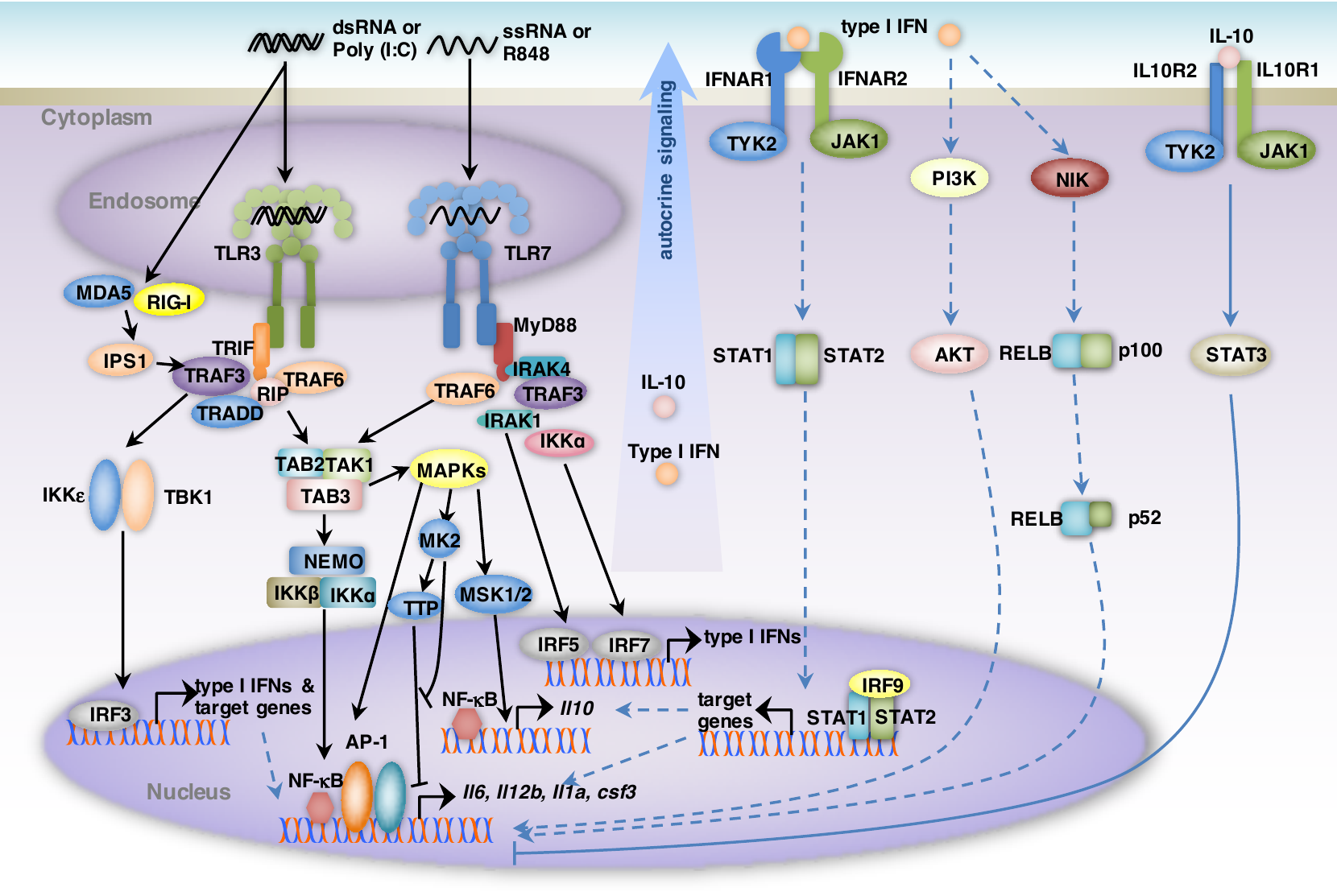}
\centering
\caption{The TLR3-TLR7 signaling network. Reprinted by permission from \textit{Science} \cite{Liu2016}.}
%\vspace{-0.5cm}
\label{tlr1}
\end{figure}

To identify the mechanism that is responsible for the synergistic production of cytokines, we built a kinetic model for the signal transduction network that underlies the activations of TLR3 and TLR7, which involves 112 ODEs and 129 unknown parameters (Fig. \ref{tlr1}). We carried out SMC based parameter estimation. To account for cellular heterogeneities, we allowed the initial concentrations to vary 5\% about their nominal values. The calibrated model reproduces the training data including the abundances of activated forms of p38, c-Jun N-terminal kinase (JNK), mitogen-activated protein kinase 1 (ERK), and NF$\kappa$B inhibitor $\alpha$ (I$\kappa$B$\alpha$) measured at 7 time points in response to single stimulation or two sequential stimulations with different time intervals between them (Fig. \ref{tlr2}a). The model predictions also match the test data including the mRNA levels of colony-stimulating factor 3 (\textit{Csf3}), interleukin 10 (\textit{Il10}), and Interleukin 1$\alpha$ (\textit{Il1a}) measured under various conditions (Fig. \ref{tlr2}b).

\begin{figure}
\centering
\includegraphics[scale=0.54]{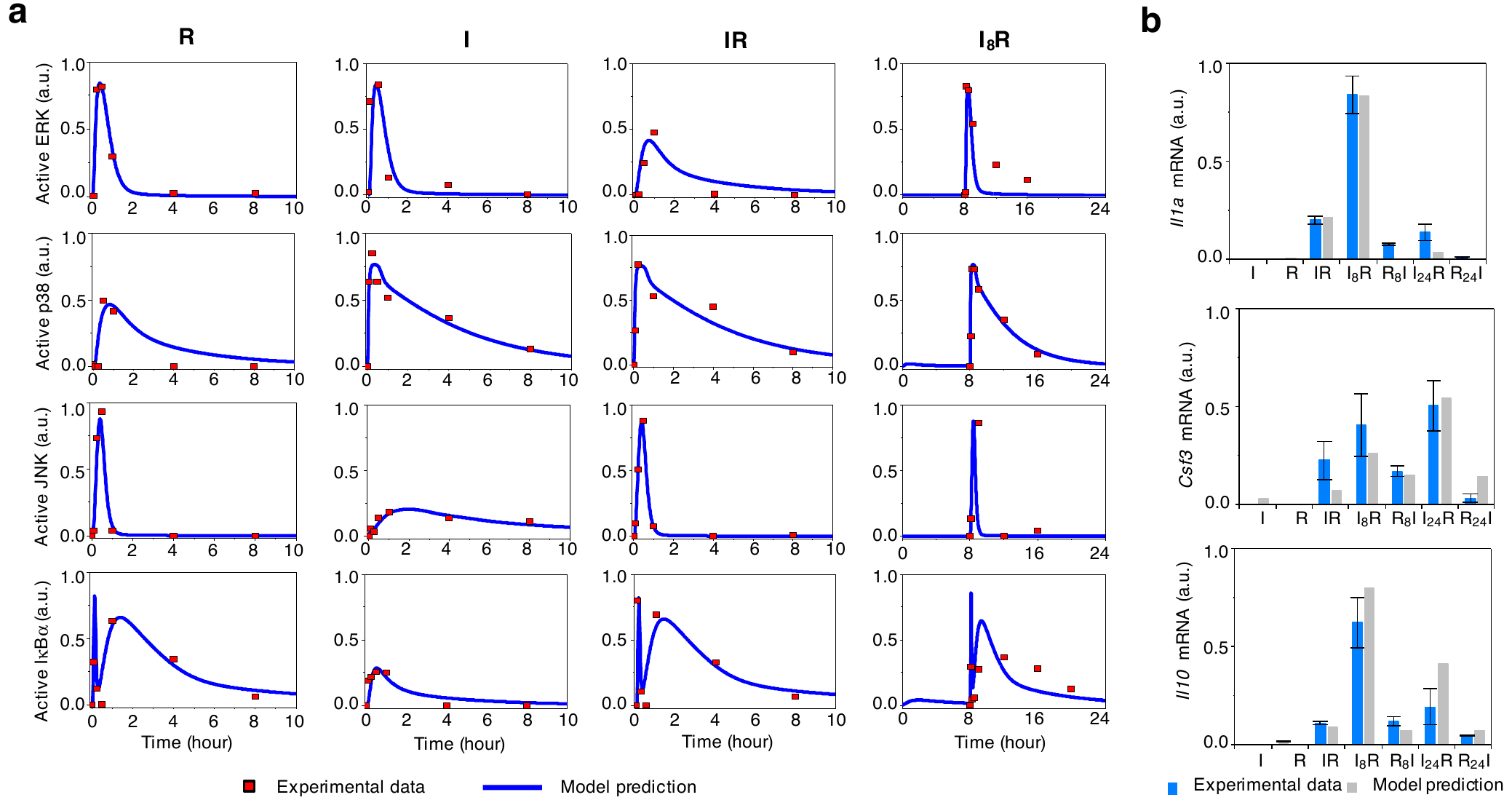}
\centering
\caption{Parameter estimation results. Model predictions vs. experimental (a) training data and (b) test data. Reprinted by permission from \textit{Science} \cite{Liu2016}.}
%\vspace{-0.5cm}
\label{tlr2}
\end{figure}

We next performed SMC based sensitivity analysis to identify critical species and reactions that govern the cytokine synergy. Our results suggested that the Janus kinase (JAK)-signal transducer and activator of transcription (STAT) pathway is responsible for synergistic production of inflammatory cytokines. Specifically, we found that STAT1 upregulates the production of inflammatory cytokines (e.g. interleukin 6 (IL-6) and interleukin 12 (IL-12)) by activating downstream transcription factors including interferon regulatory factor 1 (IRF1). On the other hand, STAT1 also promotes the production of IL-10, which in turn, suppress the production of inflammatory cytokines. Interestingly, the model reveals that the type 1 incoherent feedforward loop (I1-FFL) implicated by the opposing roles of STAT1 can induce a counterintuitive biphasic response: knocking down the initial concentration of STAT1 from its wild-type level can increase cytokine response, while the complete knockout of STAT1 markedly reduce cytokine response. Thus, STAT1 is a key regulator that not only boosts the immune response to kill intruding pathogens, but also protects host cells by maintaining the homeostasis. These model-driven findings were confirmed experimentally~\cite{Liu2016}.

\paragraph{\textbf{The Autophagy-apoptosis pathways crosstalk}}
\vspace{-0.2cm}
Autophagy is an essential cytoprotective mechanism that maintains the physiological function of the cell by removing outdated or damaged organelles and harmful aggregates of misfolded proteins, and protects cells against various stresses such as starvation and infections. Apoptosis is a common form of programmed cell death. Autophagy and apoptosis are implicated in the pathogenesis of many diseases. Both processes can be induced by the same type of stress signals and key mediators are shared between the two processes. However, it is far from known how cells evaluate the level of the stress signals and make a decision on their survival or death. We built a comprehensive model for the cell decision between autophagy and apoptosis \cite{Liu2017}, which covers the major signal transductions in response to starvation, DNA damage, and endoplasmic reticulum (ER) stress (Fig. \ref{aa1}). Our model involves 94 ODEs and 124 unknown parameters.   

%\vspace{-0.2cm}
\begin{figure}
\centering
\includegraphics[scale=0.45]{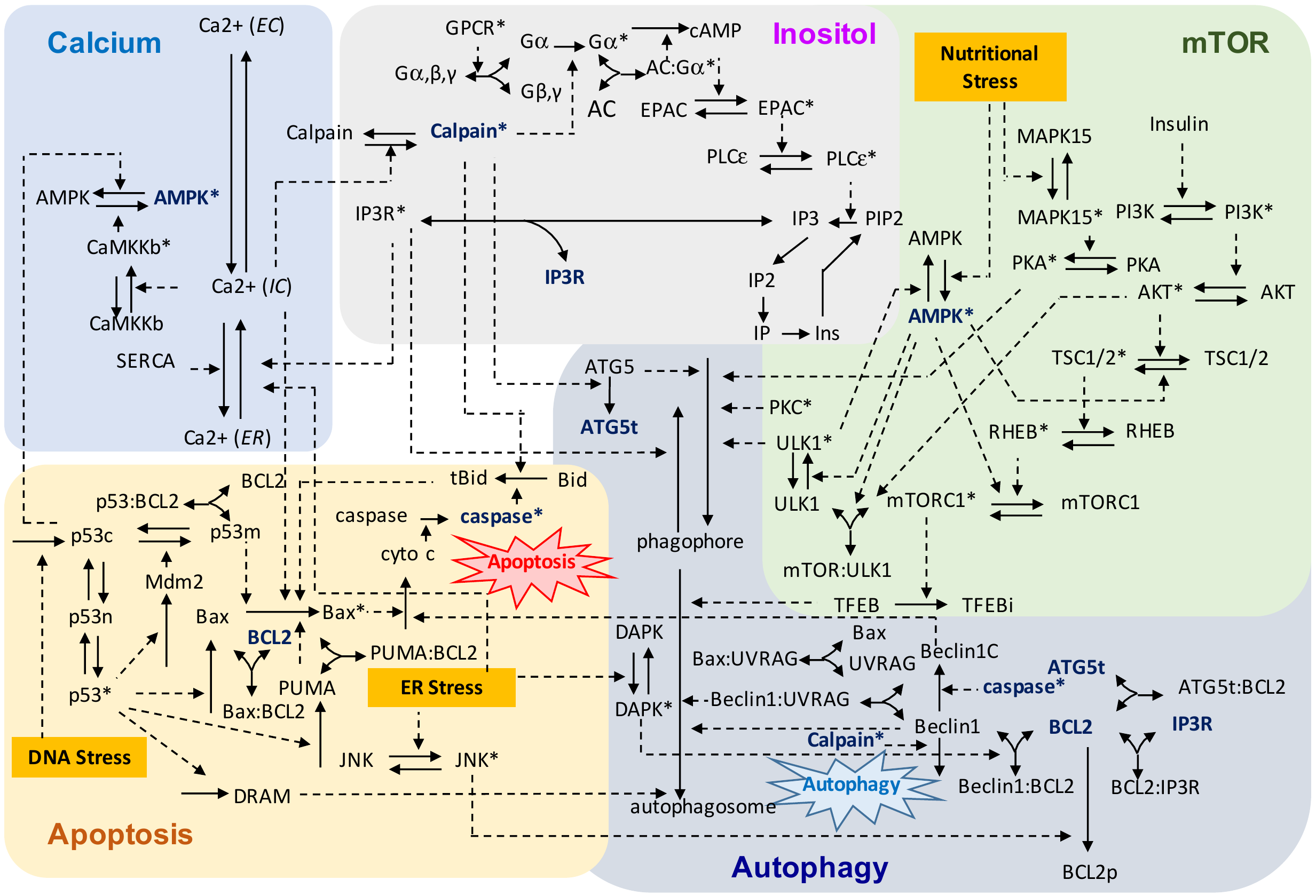}
\centering
\caption{The reaction scheme of autophagy-apoptosis signaling network. Reprinted by permission from \textit{Nature} \cite{Liu2017}.}
\vspace{-0.3cm}
\label{aa1}
\end{figure}

We performed parameter estimation using our SMC method. The training data included the time profiles of apoptotic and autophagic levels of H4 cells in response to different doses of  torin 1, staurosporine, and tunicamycin treatments (Fig. \ref{aa2}a). The test data included time profiles of apoptotic and autophagic levels of RPTC cells in response to cisplatin. Fig.~\ref{aa2} shows that simulation results match both the training and test data.

\begin{figure}
\centering
\includegraphics[scale=0.6]{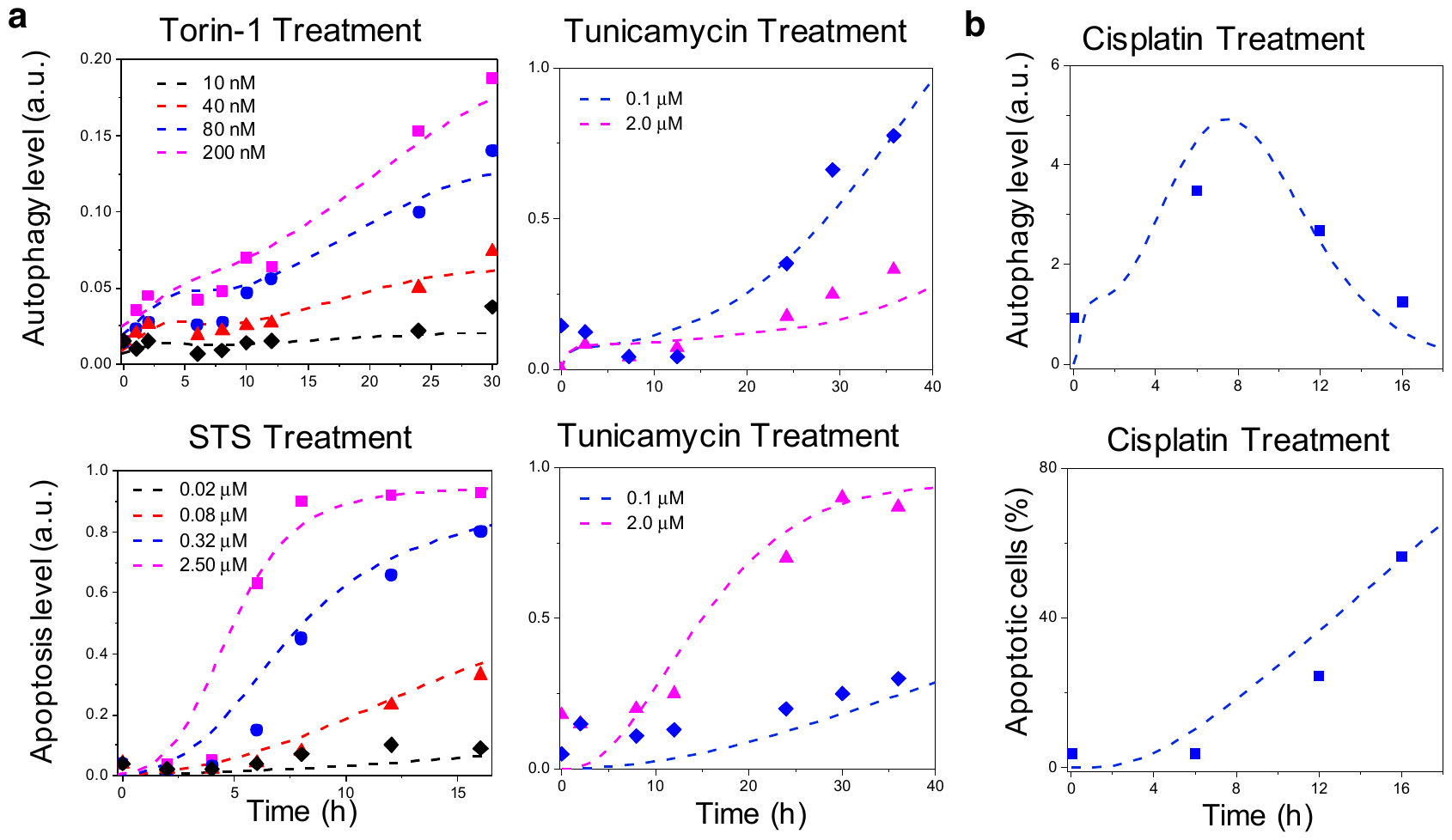}
\centering
\caption{Parameter estimation results. (a) Model simulations v.s. training data. (b) Model simulations v.s. test data.  Reprinted by permission from \textit{Nature} \cite{Liu2017}.}
\vspace{-0.5cm}
\label{aa2}
\end{figure}

We next performed SMC based sensitivity analysis. We found that p53, cytoplasmic Ca$^{2+}$, AMP-activated protein kinase (AMPK), and calpain are the most important regulators of cell fate decision. They interconnect and form a core regulatory network, which tightly controls the decisions between autophagy and apoptosis through multiple feedforward and feedback loops. Specifically, the model reveals that cytoplasmic Ca$^{2+}$ level is maintained by a feedback loop mediated by IP3 signaling. Cytoplasmic Ca$^{2+}$ regulates apoptotic and autophagic responses mainly through an incoherent feedforward loop mediated by AMPK. Interestingly, increased cytoplasmic Ca$^{2+}$ can either inhibit or enhance autophagy. The overall role of cytoplasmic Ca$^{2+}$ in modulating autophagy is determined by the expression level of Ca$^{2+}$/calmodulin-dependent protein kinase kinase-beta (CaMKK$\beta$). This partly explains the varied efficacy of Ca$^{2+}$-modulating drugs in regulating cell fate observed in different tissues.

\vspace{-0.5cm}
\subsection{SMC based analysis of hybrid automata based models}
\label{sec:hybrid_apps}
\vspace{-0.3cm}
We have also applied our method to hybrid automaton models of the cardiac cell \cite{fenton08} and  the circadian rhythm network \cite{miyano06}. Here we focus on the former.

\vspace{-0.5cm}
\paragraph{\textbf{The cardiac cell hybrid model}}
\vspace{-0.2cm}
Ion channels on the membrane of cardiac cells regulate their electrodynamics  and control the heart rhythm. Disordered electrical activation of cardiac cells is associated with heart diseases including \textit{fibrillation} and \textit{tachycardia}. Fig.~\ref{heart} shows a hybrid automaton model of the cardiac electrodynamics of a ventricular cell. The parameter values for each of the three subtypes of ventricular cell: endocardial, midmyocardial, and epicardial cells can be found in \cite{supplementary}.

The model has four state variables. $v$, $w$ and $s$ denotes channel gates, while $u$ denotes the potential across the cell membrane. $\epsilon$ denotes the external stimulus. The cardiac action potential (AP) is a change in $u$ in response to $\epsilon$.

%\vspace{-8pt}
\begin{figure}[ht]
	\centering
	\includegraphics[width=1\textwidth]{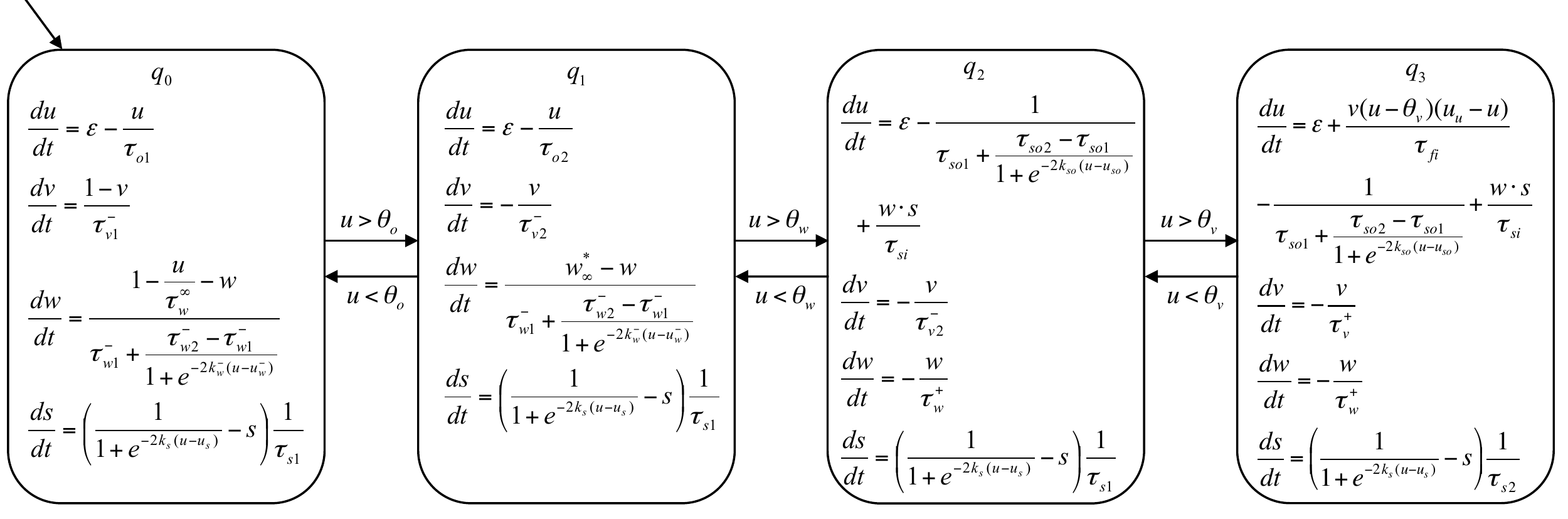}
	\caption{The cardiac cell hybrid model \cite{grosu11}. Reprinted by permission from \textit{Springer} \cite{gyori2015}.}
	\label{heart}
\vspace{-0.3cm}
\end{figure}

The model has four modes. In mode $q_0$, $s$ is closed and $v$ and $w$ are open. The cell is resting. Given a stimulation of  $\epsilon =  1$mV, which lasts for $1$ millisecond, the system can switch to mode $q_1$ when the increased $u$ reaches the threshold $\theta_{o}$. $v$ starts closing in mode $q_1$, which alters the decay rate of $u$. When $u > \theta_w$, the system will switch to mode $q_2$, in which  $w$ starts closing. The system will finally reach mode $q_3$ when $u > \theta_v$, which completes the process of an ``AP initiation''. $u$ will reach its peak in mode $q_3$, and then starts decreasing, which will induce a contraction in heart muscle.

\runinhead{Property 1}
Insufficient excitability of the cardiac cell can cause ventricular tachycardia and fibrillation. The property below specifies that the system can leave mode $q_0$ in response to a stimulation:
\begin{center}
	$\mathbf{F}^{\leq 500}(\neg[q_0])$.
\end{center}
The verification results show that under the normal condition, this property holds for midmyocardial, endocardial, and epicardial cells in response to $\epsilon =  1$mV. Under a disease condition (e.g.  $\tau_{o2} = 0.1$ or $\tau_{o1} = 0.004$, see \cite{delta}) the property does not hold for any $\epsilon$. This implies that the impulse conduction can be blocked by a region of unexcitable cells, which can cause ventricular disorders. This is consistent with \cite{tanaka07}.

\runinhead{Property 2} After reaching the mode $q_3$ (i.e. a successful AP was completed, the cardiac cell should return to mode $q_0$ and wait for the next stimulation. We specify this property as follows:
\begin{center}
	$\mathbf{F}^{\leq 500}([q_3] \wedge \mathbf{F}^{\leq 500}(\mathbf{G}^{\leq 100}([q_0])))$.
\end{center}
The verification results show this property holds for endocardial, epicardial, midmyocardial cells under the normal condition in response to a transient stimulation. If we set the duration of $\epsilon$ to be $500$ milliseconds (i.e. sustained stimulation), the property does not hold. This implies that cardiac cells are unable to reach the resting state. It is known that the stimulus $\epsilon$ can be generated by neighboring cells in ventricular tissue \cite{fenton08}. Our results suggest that the stimulation profile of the neighboring cells can shape the stimulation profile of a single cell. 

\begin{figure}[ht]
%\sidecaption
	\centering
	\includegraphics[width=0.9\textwidth]{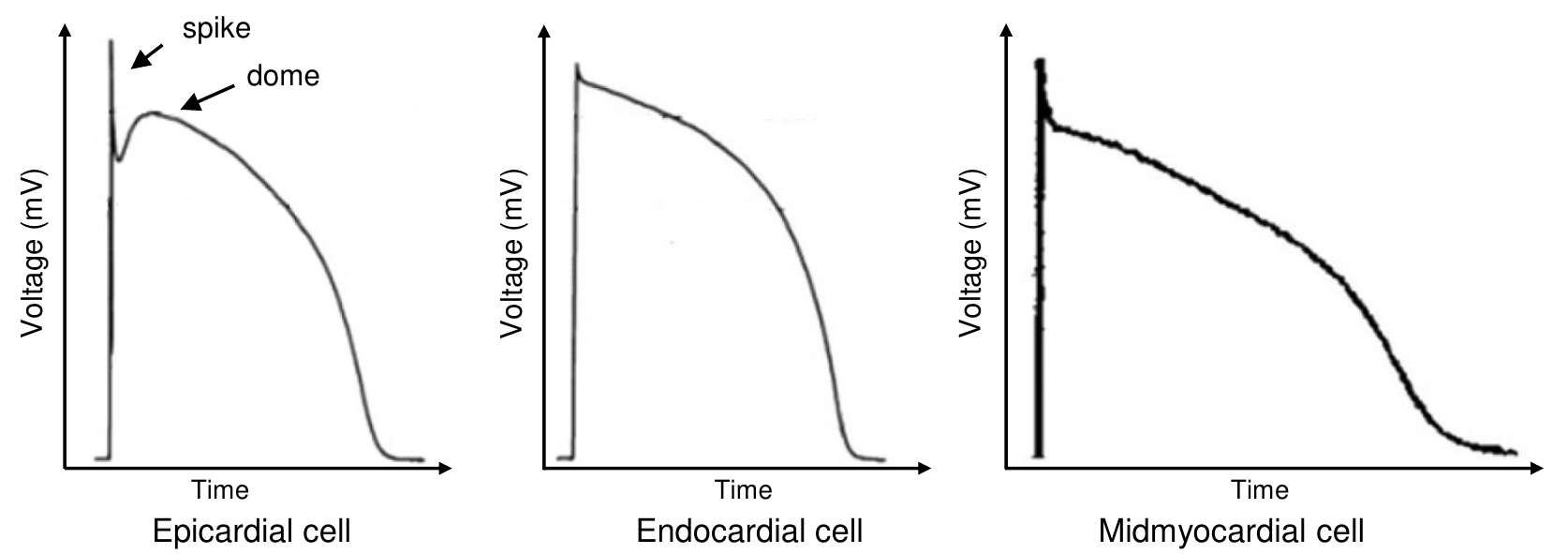}
	\caption{The morphologies of AP in endocardial \cite{nabauer96}, epicardial \cite{nabauer96}, and midmyocardial \cite{drouin95} cells. Reprinted by permission from \textit{Springer} \cite{gyori2015}.}
	\label{morphology}
\vspace{-0.3cm}
\end{figure}

\runinhead{Property 3} Midmyocardial, epicardial, and endocardial cells have different morphologies of AP profile \cite{nabauer96,drouin95}. Specifically, a critical ``spike-and-dome'' (see Fig. \ref{morphology}, \textit{left panel})
AP morphology can not be observed in midmyocardial and endocardial cells but can be observed in epicardial cells (Fig.~\ref{morphology}).
The property below specifies that the system has a spike-and-dome AP morphology.
\begin{center}
	$\mathbf{F}^{\leq 500}(\mathbf{G}^{\leq 1}([1.4 \leq u]) \wedge \mathbf{F}^{\leq 500}([0.8 \leq u] \wedge [u \leq 1.1] \wedge \mathbf{F}^{\leq 500}(\mathbf{G}^{\leq 50}([1.1 \leq u]))))$.
\end{center}
The verification results show that under the normal condition this property does not hold for midmyocardial and endocardial cells  and \emph{true} for epicardial cells  in response to a transient stimulation. By perturbing each model parameter and checking if the property still holds in epicardial cells, we found that  $\tau_{s2}$ is critical to the spike-and-dome morphology of AP. The property does not hold when $\tau_{s2}=2$, suggesting that gate $s$ is crucial to epicardial cells. This is supported by the results reported in \cite{delta} that a model without $s$ \cite{fenton98} is unable to reproduce the experimentally observed electrodynamics in epicardial cells.

\vspace{-0.5cm}
\section{Discussion}
\label{sec:conclusion}
\vspace{-0.3cm}
In this chapter, we presented approaches to analyzing biopathway models with the help of statistical model checking via a probabilistic transformation  of the behaviors of  deterministic ODEs, and non-deterministic hybrid automata. The main motivation is to handle cell-cell variability in the dynamics that arises due to fluctuations in the initial concentrations and parameter values across a cell population.  We obtain the required transformations of the dynamics in a mathematically grounded way by imposing reasonable continuity assumptions and exploiting basic aspects of measure theory and the theory of ODEs. The additional ingredient required from the modeling side is intervals of values for the initial conditions and kinetic rate constants as well as a probability distribution over this set of values. Our transformations opens the path to applying scalable statistical analysis methods to complex dynamics in a variety of settings.

To bring this out, we have constructed a SMC based approach for parameter estimation. In this setting BLTL is used to represent both quantitative time course experimental data and known qualitative dynamical properties of pathways. The probability of the BLTL specification being satisfied by the dynamics according to the chosen parameter values is assessed by the SMC procedure. By equipping existing parameter space exploring strategies with this pointwise evaluation method, one can arrive both novel and efficient parameter estimation and sensitivity analysis methods.

We have used the MLC phosphorylation pathway model to illustrate the applicability of our parameter estimation method. We have also applied our method to the studies of Toll-like receptor \cite{Liu2016}, autophagy \cite{Liu2017}, p53 \cite{Liu2014}, and ferroptosis \cite{kagan2017} pathways. In these applications our method has lead to estimated parameters with good quality given noisy cell-population based experimental observations. These results also demonstrated the scalability of our method in practical settings. It is worth pointing out that our method is a generic one. For instance it has been generalized to apply to rule-based models \cite{Liu2016bibm}. The scalability and computational efficiency of this parameter estimation technique has been further enhanced by a GPU-based implementation \cite{Ramanathan2015}.

We also presented a probabilistic verification method for analyzing the rich dynamics of a hybrid automaton $H$ using an associated  Markov chain $M$. We have established a strong relationship between the behaviors of $H$ and $M$ for time bounded properties. As a result, the intractable verification problem for $H$ can be solved approximately using $M$. Concretely, we have constructed an SMC procedure to verify that $M$ almost certainly meets a BLTL specification. The two case studies we have carried out demonstrate the applicability of this novel analysis method for hybrid systems. 

Turning to future work, our approach has the potential to be extended to a network of hybrid systems. In such settings one can model crosstalk, feedback and feedforward loops that involve multiple signaling pathways. It will also be fruitful to develop GPU based methods to accelerate the trajectory sampling procedure of hybrid models as also to explore parameter landscapes in parallel to identify likely regions that will lead to the desired system responses to stimuli of interest. We plan to integrate our methods into a generic SMC based tool to enable automatic model calibration, validation, and analysis. The tool will be equipped with interactive web-based user interface to enable non-experts to convert different types of experimental data in to BLTL formulas. An interesting challenge will be to develop NLP approaches to enable the automatic extraction of qualitative biological knowledge from existing databases and literature.  In this context, literature-driven pathway model construction tools such as INDRA \cite{gyori2017word} promise to offer significant help.

%\runinhead{Acknowledgements}
%This work was funded by the NIH grants P41GM103712 and U19AI068021.
\section*{Acknowledgements} 
This work was partly supported by the NIH awards P41GM103712, U19AI068021 and P30DA035778.

\bibliographystyle{spmpsci}
\bibliography{ref}
\end{document}